\newcommand*{\Prob}{\mathsf{Pr}}
\newcommand{\rev}[1]{#1} 
\newcommand{\revtwo}[1]{#1} %
\title{Classical symmetries and \revtwo{the Quantum Approximate Optimization Algorithm}}
\author{Ruslan Shaydulin        \and
        Stuart Hadfield         \and
        Tad Hogg                \and
        Ilya Safro %
}
\institute{R. Shaydulin \at
              Argonne National Laboratory, Lemont, IL 60439, USA \\
              Quantum Artificial Intelligence Lab (QuAIL), NASA Ames Research Center, Moffett Field, CA 94035, USA \\
              KBR, Houston, TX 77002, USA \\
              \email{ruslan@anl.gov}           %
          \and
          S. Hadfield \at
              Quantum Artificial Intelligence Lab (QuAIL), NASA Ames Research Center, Moffett Field, CA 94035, USA \\
              USRA Research Institute for Advanced Computer Science (RIACS), Mountain View, CA 94043, USA
          \and 
          T. Hogg \at
              Quantum Artificial Intelligence Lab (QuAIL), NASA Ames Research Center, Moffett Field, CA 94035, USA
          \and 
          I. Safro \at
              Computer and Information Sciences, University of Delaware, Newark, DE 19716, USA
}
\newcommand\thefontsize[1]{{#1 The current font size is: \f@size pt\par}}
\providecommand{\customgenericname}{}
\newcommand{\newcustomtheorem}[2]{%
  \newenvironment{#1}[1]
  {%
   \renewcommand\customgenericname{#2}%
   \renewcommand\theinnercustomgeneric{##1}%
   \innercustomgeneric
  }
  {\endinnercustomgeneric}
}
\newcommand{\pmin}{p_{\min}}
\newcommand{\maxcut}{MaxCut}
\let\antilandscape\landscape
\def\LS@antirot{%
  \setbox\@outputbox\vbox{\hbox{\rotatebox{-90}{\box\@outputbox}}}}
\patchcmd{\antilandscape}{\LS@rot}{\LS@antirot}{}{}
\begin{document}

\maketitle

\begin{abstract}
We study the relationship between the Quantum Approximate Optimization Algorithm (QAOA) and the underlying symmetries of the objective function to be optimized. Our approach formalizes the connection between quantum symmetry properties of the QAOA dynamics and  the group of classical symmetries of the objective function. The connection is general and includes but is not limited to problems defined on graphs. We show a series of results exploring the connection and highlight examples of hard problem classes where a nontrivial symmetry subgroup can be obtained efficiently. In particular we show how classical objective function symmetries lead to invariant 
measurement outcome probabilities across states connected by such symmetries, independent of the choice of algorithm parameters or number of layers. 
To illustrate the power of the developed connection, we apply machine learning techniques towards predicting QAOA performance based on symmetry considerations. We provide numerical evidence that a small set of graph symmetry properties suffices to predict the minimum QAOA depth required to achieve a target approximation ratio on the MaxCut problem, in a practically important setting where QAOA parameter schedules are constrained to be linear and hence easier to optimize.
\keywords{Quantum Approximate Optimization Algorithm \and Quantum Optimization }
\end{abstract}

\section{Introduction}

Advances in quantum hardware such as the recent demonstration of
quantum supremacy%
~\cite{Arute2019} have %
paved the way for the %
deployment of %
algorithms %
that potentially offer %
quantum advantage in applications, that is, outperform state-of-the-art classical algorithms for some computationally challenging problem. Combinatorial optimization is a particularly attractive and important target domain as %
optimization problems are omnipresent in science and industry and the difficulty of such problems means one must often settle for approximation algorithms or heuristics in practice. The Quantum Approximate Optimization Algorithm (QAOA)~\cite{farhi2014quantum,hadfield2017quantum} is %
a leading candidate 
algorithmic paradigm 
leveraging near-term quantum hardware to approximately solve certain classes of hard combinatorial optimization problems. 
Published results to date have %
shown experimental 
demonstrations of 
QAOA %
on quantum hardware using up to 23 qubits~\cite{googleqaoaexperimental}. 
Nevertheless, although a number of recent results have begun to shed some light on the potential of QAOA as compared with classical algorithms and heuristics, the power of QAOA is not yet well understood.

Nontrivial performance guarantees for QAOA circuits have thus far been obtained only in very limited settings, %
with the majority of rigorous bounds to the approximation ratio achievable by QAOA %
obtained only for small (high-level QAOA circuit depth)~$p$, primarily $p=1$ or $p=2$~\cite{farhi2014quantum,farhi2014quantumbounded,wang2018quantum,hadfield2018quantum, Wurtz2020bounds}. A notable exception is~\cite{jiang2017near}, which shows that QAOA-related circuits can recover the well-known quadratic speedup for Grover's unstructured search problem. Furthermore, such performance results are typically worst case and do not take advantage of the structure of a given problem instance, %
nor do they give much insight into the performance of deeper QAOA circuits.
Indeed, initial numerical studies on  small problem instances have indicated that going beyond small $p$ appears necessary to obtain potential quantum advantage~\cite{Shaydulin2019EvaluatingDOI,wang2018quantum,zhou2018qaoaperformance,crooks2018performance}.   %
From the perspective of lower bounds, a recent paper of Hastings~\cite{hastings2019} demonstrates that for QAOA with $p=1$, classical local algorithms can
achieve the same (or better) performance guarantees as QAOA; the author  argues that for some problems increasing the QAOA depth by any bounded amount would not lead to significant improvements in performance.
Farhi et al.~\cite{farhi2020qaoaneedsfullgraph} show an upper bound for the approximation ratio attainable by a particular realization of QAOA for the MaxIndependentSet problem on $d$-regular graphs with depth~$p$ growing slower than logarithmically with the number of nodes, implying that%
~$p$ must grow at least logarithmically for QAOA to outperform a particular approximation ratio.  
They note the observed relationship between the required depth and the graph diameter, suggesting that for graphs with larger diameter (e.g., 2d lattices) the depth required to obtain a quantum advantage in this case 
may be polynomially scaling in the problem size.

While clearly important, these findings leave many  practical questions unresolved, such as  ``What is the QAOA depth required to adequately solve a given problem instance?''
For example, Farhi et al.~\cite{farhi2020qaoaneedsfullgraph} note that for a class of 3-regular graphs with 2 million qubits, the upper bound they prove  yields a necessary depth of only $p=7$.

A promising approach to overcoming limitations of the previous analyses is the application of symmetry ideas and analysis 
to QAOA.
Understanding symmetry has
long been central to the analyses of topological phases of matter~\cite{chiu2016classification}, the hardness of ground state preparation for practically interesting local Hamiltonians~\cite{hastings2012tlescommutingham,Eldar2017}, mathematical programming~\cite{ostrowski2011orbital,barnhart1998branch}, and quantum query complexity~\cite{2006.12760}, among other important applications in quantum computing.   
Much insight likewise may be gained through 
the connection between the symmetries of the QAOA ansatz (that is, the symmetries
shared by the QAOA %
quantum circuit and initial state) and the symmetries of the target classical optimization problem.
Indeed, recently Bravyi et al.~\cite{braviyobstacles} have utilized the  $\mathbb{Z}_2$ symmetry exhibited by 
the \maxcut{} problem and its corresponding QAOA ansatz %
together with the properties of bipartite expander graphs to show that constant-depth QAOA for %
\maxcut{} %
is outperformed by the classical Goemans--Williamson algorithm for some families of graphs. This result highlights the potential
of %
symmetry insights in advancing our understanding of the performance and limitations of QAOA and related approaches to quantum optimization. 

In this paper we focus on QAOA as originally proposed in~\cite{farhi2014quantum} for clarity of presentation, although our approach and analysis serve as a template that may be adapted to more general QAOA ansatz\"e~\cite{hadfield2017quantum,bartschi2020grover} or other parameterized quantum circuits. %
Our %
main technical result (Theorem~\ref{thm:qaoasym}) connects the classical symmetries of the problem objective (cost) function to be optimized to symmetries of the QAOA dynamics,  %
operationalized in terms of QAOA outcome probabilities. 
\rev{By mapping the symmetries to quantum operators and considering corresponding commutators, w}e show that the binary strings connected by a symmetry of the objective function will have the same output probabilities. %
For the special case of variable (qubit) permutation symmetries, the 
theorem implies the following %
simple corollary. 

\begin{corollary}\label{cor:permenergy}
 Suppose a group of variable permutations leaves the objective function invariant. Then QAOA %
 output probabilities are the same across all bit strings connected by such permutations, for all fixed choices of QAOA parameters and depth.
 \end{corollary}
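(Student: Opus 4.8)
The plan is to recognize Corollary~\ref{cor:permenergy} as the specialization of Theorem~\ref{thm:qaoasym} to the case in which the objective-function symmetry is a permutation of the problem variables (qubits), so that the entire argument reduces to exhibiting, for each such permutation, a unitary operator on the $n$-qubit Hilbert space that meets the hypotheses of the theorem. Concretely, to a variable permutation $\sigma$ satisfying $C(\sigma\cdot x)=C(x)$ for every bit string $x$ I would associate the permutation operator $P_\sigma$ defined on the computational basis by $P_\sigma|x\rangle=|\sigma\cdot x\rangle$, which is manifestly unitary. The remaining work is to check that $P_\sigma$ commutes with the QAOA generators and fixes the initial state, after which the theorem delivers the claim.

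First I would verify the three structural properties. Commutation with the cost Hamiltonian, $P_\sigma H_C P_\sigma^{\dagger}=H_C$, follows immediately from $H_C|x\rangle=C(x)|x\rangle$ together with the invariance $C(\sigma\cdot x)=C(x)$, and is the only place the hypothesis on the objective is used. Commutation with the transverse-field mixer $H_B=\sum_{j=1}^n X_j$ follows from $P_\sigma X_j P_\sigma^{\dagger}=X_{\sigma(j)}$ and the fact that the sum runs over all qubits, so relabeling merely permutes the summands. Invariance of the initial state, $P_\sigma|+\rangle^{\otimes n}=|+\rangle^{\otimes n}$, holds because the uniform product state is symmetric under any qubit relabeling. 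Consequently $P_\sigma$ commutes with every factor $e^{-i\beta_k H_B}$ and $e^{-i\gamma_k H_C}$ of the QAOA unitary and fixes $|+\rangle^{\otimes n}$, so the output state $|\boldsymbol\gamma,\boldsymbol\beta\rangle$ is itself left invariant by $P_\sigma$; applying Theorem~\ref{thm:qaoasym} (equivalently, writing $\langle\sigma\cdot x|\boldsymbol\gamma,\boldsymbol\beta\rangle=\langle x|P_\sigma^{\dagger}|\boldsymbol\gamma,\boldsymbol\beta\rangle=\langle x|\boldsymbol\gamma,\boldsymbol\beta\rangle$) shows that the amplitudes, and hence the probabilities, agree on $x$ and $\sigma\cdot x$ for all parameters and depth~$p$.

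I expect the only real content to lie in the mixer commutation and initial-state invariance: these are precisely the steps that exploit the special structure of standard QAOA, and they explain why variable permutations are automatically symmetries of the full ansatz, whereas an arbitrary symmetry of $C$ (for instance one acting nontrivially on the $X_j$) would have to be checked against the mixer separately. Finally, to upgrade the single-permutation statement to the full group I would note that the set of qubit permutations fixing $C$ is closed under composition, so $\sigma\mapsto P_\sigma$ is a representation and the invariance propagates along entire orbits; applying the elementwise result to each group element then yields equality of output probabilities across all bit strings connected by the group, as claimed.
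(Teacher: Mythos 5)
Your proposal is correct and follows essentially the same route as the paper: the paper derives Corollary~\ref{cor:permenergy} directly from Theorem~\ref{thm:qaoasym} by observing that a variable permutation lifts to a qubit-permutation matrix $A\in S_n\subset S_{2^n}$ which automatically commutes with the transverse-field mixer $B=\sum_j X_j$ (condition (ii)) and, by invariance of $f$, with $C$ (condition (i)), with the initial-state invariance $A\ket{+}^{\otimes n}=\ket{+}^{\otimes n}$ and the push-through-the-circuit amplitude argument already built into the proof of the theorem via Lemma~\ref{lem:qaoasymgeneral}. Your explicit verification of these commutations, the initial-state fixing, and the closure of the permutation group under composition matches what the paper does (the group step being handled by the second bullet of Theorem~\ref{thm:qaoasym}), so there is nothing to add.
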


\noindent More generally, Sec.~\ref{sec:theory} shows that %
symmetries of the classical objective function induce a group of symmetries of the QAOA ansatz,
and this leads to conserved probability amplitudes over the subspace of states connected by these symmetry transformations. 
Our results remain applicable even if we know only  some convenient 
subgroup of the group of all possible symmetries of the classical objective function.
An important and well-known fact is that such symmetries can 
imply quantum dynamics restrict to a reduced Hilbert space. 
For QAOA we quantify the relationship between the group of symmetries and the resulting dimension of the effective Hilbert space in terms of group properties~(Theorem~\ref{thm:qaoareduced}), which in some cases can be exponentially reduced, and we discuss implications to %
classical simulation. We illuminate our results with several examples.  

In Sec.~\ref{sec:learningtopsection} we %
show an illustrative %
application of our %
results 
by extending %
our framework %
to the practical task of predicting the QAOA depth required to achieve a desired approximation ratio, in the setting of linear QAOA parameter schedules. 
We %
demonstrate a machine learning approach to the instance-wise analysis of QAOA that %
constructs symmetry groups and a number of related symmetry measures of the classical optimization problem in order to extract %
features used %
to predict the required QAOA depth. We focus specifically on \maxcut{}, which %
is a standard problem considered for QAOA~\cite{farhi2014quantum,wang2018quantum,hadfield2018quantum,crooks2018performance,zhou2018qaoaperformance,Shaydulin2019EvaluatingDOI,Guerreschi2019,braviyobstacles,Szegedy2019qaoaenergies}, %
with symmetries related to those of the underlying graph, though our approach may be extended to other problems of interest.  
We note that for many %
practically important but %
classically hard classes of problem instances, such as \maxcut{} on bounded degree graphs, the full symmetry group of the graph can always be %
constructed in polynomial time~\cite{luks1982isomorphism}.

The %
paper is organized as follows. Section~\ref{sec:background} presents an overview of the relevant background. Section~\ref{sec:theory} presents a series of analytical results exploring the connections between classical symmetry and QAOA. Section~\ref{sec:learningtopsection} provides an example application %
of %
using symmetry considerations to predict the QAOA depth required achieve a target approximation ratio for \maxcut{} in the case of linear QAOA schedules.
The methods and numerical results in Sec.~\ref{sec:learningtopsection} are mostly self-contained, and the interested reader may proceed there directly, independently of the details and proofs of Sec.~\ref{sec:theory}. Section~\ref{sec:discussion} concludes %
our study with an outlook toward future applications. 

\noindent {\bf Reproducibility: } %
The Python code for the numerical experiments of Sec.~\ref{sec:learningtopsection} and the %
generated dataset are available online~\cite{code,rawdata}.

\section{Background}\label{sec:background}

In this section we briefly review our notions of binary optimization, QAOA, and classical symmetries. Though for simplicity we consider maximization problems 
our results apply similarly to minimization. %

\paragraph{Binary optimization} Consider a class of problem instances, that is,
nonnegative objective functions $f(x)$ on the Boolean cube $ \{0,1\}^n$, with the corresponding 
optimization problem
\begin{equation}\label{eq:binopt}
\max_{x\in \{0,1\}^n}f(x). %
\end{equation}
\noindent 
The objective function~$f$ may be encoded as the $n$-qubit Hamiltonian $C$, %
which acts as $C\ket{x}=f(x)\ket{x}$ for each $x\in\{0,1\}^n$, namely, the diagonal matrix in the computational basis
\[
C=diag(f(x)). %
\]

\noindent Compact representations of such Hamiltonians (in terms of Ising spin, i.e.,  Pauli-$Z$ matrices) can be constructed efficiently for many combinatorial optimization problems~\cite{hadfieldrepresentation}. %
A candidate solution $x^*\in \{0,1\}^n$ %
is %
an $r$-approximation %
for a given instance, $0\leq r\leq 1$ if
\begin{equation}\label{eq:approx_r}
f(x^*) \; \geq \; r \cdot \max_{x}f(x),    
\end{equation}
and an algorithm is said to %
achieve approximation ratio~$r$  
for~(\ref{eq:binopt}) %
if it returns %
an $r$-approximation or better 
for every problem instance in the class (i.e., in the worst case). %
For many applications (e.g., NP-hard problems) the best \textit{efficient} algorithms known can  achieve only some $r<1$, where~$r$ may or may not depend on the problem size~\cite{vazirani2013approximation}. 
Similar definitions apply for minimization problems.

\paragraph{\maxcut{}} %
An objective function can be defined on a graph $G=(V,E)$ by choosing an appropriate encoding. For %
graph problems such as \maxcut{}, the commonly used encoding we consider assigns a binary indicator variable to each vertex %
such that variable assignments %
correspond to subsets of vertices. With this choice of encoding, the set of binary strings $x\in\{0,1\}^{|V|}$ encodes the possible configurations of the graph using~$|V|$ qubits; note that alternative choices of encoding may give different configuration spaces with potentially different resource requirements such as number of bits.
For \maxcut{} the objective 
is find a subset of vertices (i.e., partition the %
vertices~$V$ into two disjoint parts) such that the number of cut graph edges (those with an endpoint in each part) is maximized. %
The \maxcut{} problem on general graphs is APX-complete~\cite{papadimitriou1991optimization}, which means it
has no polynomial-time approximation scheme %
unless P=NP (i.e., 
no efficient deterministic
classical algorithm can find a cut achieving an approximation ratio better than some constant), 
although efficient 
algorithms
exist for particular classes of graphs~\cite{Arora1999}. 
In light of this computational difficulty, \maxcut{} has been considered extensively for QAOA; see, for example, ~\cite{farhi2014quantum,wang2018quantum,braviyobstacles}. 
In the standard problem 
mapping %
the %
objective function %
is encoded by the Hamiltonian $$C_{\text{\maxcut{}}} = \frac{1}{2}\sum_{(u,v)\in E}(I-Z_uZ_v),$$ where $Z_u$ indicates the single-qubit Pauli-$Z$ operator applied to the qubit corresponding to vertex%
~$u$. Although in this paper we consider \maxcut{}, we emphasize that our approach and
results are general and may be applied to other problems of interest. %

\paragraph{QAOA} The Quantum Approximate Optimization Algorithm is a hybrid quantum-classical algorithm that combines a parameterized quantum evolution with a classical outer-loop optimizer to (approximately) solve binary optimization problems~\cite{farhi2014quantum,hadfield2017quantum}.
At each call to the quantum computer, 
a trial state is prepared by applying a sequence of quantum %
operators in alternation 

\begin{equation} \label{eq:QAOAstate}
\ket{\vec{\beta}, \vec{\gamma}}_p := %
U_B(\beta_p)U_C(\gamma_p)\ldots U_B(\beta_1)U_C(\gamma_1)\ket{s},
\end{equation}

\noindent where $C$ is the problem Hamiltonian, $U_C(\gamma) = e^{-i\gamma C}$ is the phase operator, $U_B(\beta)$ is the mixing operator, and $\ket{s}$ is some easy-to-prepare initial state. %
The parameterized quantum circuit (\ref{eq:QAOAstate}) is called the \emph{QAOA ansatz}. We refer to the number of alternating operator pairs~$p$ as the \textit{QAOA depth}. The selected parameters $\vec{\beta},\vec{\gamma}$ are said to define a \emph{schedule}, analogous to a similar choice %
in quantum annealing. For unconstrained optimization problems as we consider in this paper,\footnote{Alternative problem encodings, mixing operators, and initial states for QAOA have been proposed; see e.g.~\cite{hadfield2017quantum}.} the standard mixing operator is $U_B(\beta)=e^{-i\beta\sum_jX_j}$ which corresponds to time evolution under the transverse field Hamiltonian $B:=\sum_j X_j$, and the initial state is the uniform superposition product state $\ket{s} :=\tfrac{1}{\sqrt{2^n}}\sum_{x\in\{0,1\}^n} \ket{x}=\ket{+}^{\otimes n}$. Here $X_j$ similarly denotes the single-qubit Pauli-X operator applied to qubit $j$. 

Preparation of the state (\ref{eq:QAOAstate}) is followed by a measurement in the computational basis; we let $\Prob_p(x)$
denote the probability of obtaining bitstring $x\in \{0,1\}^n$ with 
\[
\Prob_p(x) = |\braket{x|\vec{\beta},\vec{\gamma}}_p|^2.
\]
The output of repeated state preparation and measurement may be used by a classical outer-loop algorithm to select the schedule~$\vec{\beta},\vec{\gamma}$. 
We consider %
optimizing the expectation value $\langle C\rangle_p = \bra{\vec{\beta}, \vec{\gamma}}_pC\ket{\vec{\beta}, \vec{\gamma}}_p$
as originally proposed in~\cite{farhi2014quantum},
although other outer-loop objectives have been %
suggested in the literature; see, for example,~\cite{Barkoutsos2020}. 
The output of the overall procedure is the best solution found for the given problem. 
We emphasize that the task of finding good QAOA
parameters appears challenging in general. In Sec.~\ref{sec:learningtopsection} we 
explore a strategy for reducing the number of %
free parameters towards mitigating this issue. 
Hence we say that depth~$p$ QAOA achieves approximation ratio $r$ for a problem instance $f(x)$ if there exist parameters $\vec{\beta},\vec{\gamma}$ such that 
\begin{equation}\label{eq:qaoa_approx_r}
 \langle C\rangle_p := \bra{\vec{\beta}, \vec{\gamma}}_pC\ket{\vec{\beta}, \vec{\gamma}}_p \; \geq \; r \cdot \max_{x}f(x), 
\end{equation}
and similarly in the case of \textit{QAOA with parameter schedules} where%
~$\vec{\beta},\vec{\gamma}$ satisfying \eqref{eq:qaoa_approx_r} are restricted to a specified set of allowed values as considered in Sec.~\ref{sec:learningtopsection}. 
Note that under this definition the approximation ratio can be evaluated exactly in simulation, but only approximately on a quantum computer due to finite sampling error.

\paragraph{Classical symmetries} We define a \emph{classical (objective function) symmetry} to be a permutation of the Boolean cube
mapping each $x\rightarrow x'$ such that the objective function $f(x)=f(x')$ is %
unchanged, for each $x\in\{0,1\}^n$. More formally, we consider the symmetric group $S_{2^n}$ of permutations on $n$-bit strings~\cite{rotman2015advanced} (i.e., permutations of $2^n$ objects) and its subgroups (called permutation groups), in particular the natural subgroup $S_n\subset S_{2^n}$ induced by the group of
permutations of the bit indices. 
(We use $S_n$ to denote this %
important subgroup throughout.) 
Then a \emph{symmetry} of the objective function $f(x)$ is a transformation $a\in S_{2^n}:\{0,1\}^n\rightarrow \{0,1\}^n$ such that for each $x \in \{0,1\}^n$ we have $f(x)=f(a(x))$. One can easily  see that a set of such symmetries gives a group under composition; indeed, we leverage several ideas from group theory in our results below.

On qubits, each symmetry $a(x)$ %
may be faithfully represented as the permutation matrix \\
$A =\sum_{x\in\{0,1\}^n}\ket{a(x)_1\ldots a(x)_{n}}\bra{x_1\ldots x_n}$,
which acts as $A\ket{x}=\ket{a(x)}$ for each computational basis state $\ket{x}$ and is unitary $A^\dagger A=I$. With this representation in mind, we will sometimes write  $A\in S_{2^n} $, %
in place of the permutation $a$. 
(In particular, we will also use $S_{n}$ to denote %
permutations on $n$ qubits.) 
Then an objective function having symmetry $A\in S_{2^n}$ %
is equivalent to the condition $A^\dagger C A=C$ for its Hamiltonian representation~$C$. For example, this condition is always satisfied for %
the MaxCut Hamiltonian $C_{\text{\maxcut{}}}$ above %
for the symmetry $A=X_1X_2\dots X_n\in S_{2^n}$,
which corresponds to %
flipping all bits simultaneously. %
Similarly, we say a general %
operator $Q$ acting on $n$ qubits is invariant under a symmetry $A$ if it satisfies $A^\dagger QA=Q$.

Any set of objective function symmetries $\{A_1,\dots, A_\ell\}\subset S_{2^n}$ generates a group under composition.
Two bit strings $x,y \in \{0,1\}^n$ are then said to be in the same orbit if there exists a symmetry group element $A$ such that $\ket{y}=A\ket{x}$, which gives an equivalence relation on $\{0,1\}^n$ (or, equivalently, on $\{\ket{x}:x\in\{0,1\}^n$\}). 
We remark that our consideration of permutation groups is  general since from Cayley's theorem~\cite{rotman2015advanced} every finite group is isomorphic to a permutation group on some underlying set. 
Two important observations for our results to follow are that the standard QAOA initial state $\ket{s}=\ket{+}^{\otimes n}$ satisfies $A\ket{s}=\ket{s}$ for any %
$A\in S_{2^n}$ and that for any qubit permutation $A\in S_n$ the transverse-field mixing Hamiltonian is invariant~$A^\dagger BA=B$.

\paragraph{Graph symmetries} 
A symmetry of a graph is a transformation that leaves the graph invariant. %
 For optimization problems on a graph such as \maxcut{} with a given encoding, graph symmetries induce classical symmetries, typically corresponding to a subgroup of the qubit permutations~$S_n$. %
Indeed, we will primarily consider
the well-studied class of graph (vertex) automorphisms, that is,    %
permutations of the vertices that %
map a graph onto itself. Objective functions on graphs that do not depend on the vertex or edge labels are naturally invariant under such transformations.
More formally, for a graph $G=(V,E)$, an  automorphism is a permutation $\sigma:V\rightarrow V$ such that $(\sigma(v), \sigma(u))\in E$  if and only if  $(u,v)\in E$. One can easily  see that the automorphisms of a given graph form a group under composition~\cite{biggs1993algebraic}.
While the size (order) of the group of automorphisms 
can be as large as $n!$ for structured graphs, random graphs generally have much smaller automorphism groups. For example, random three-regular graphs or Erdos--Renyi model graphs almost surely have no nontrivial automorphisms~\cite{kim2002asymmetry,erdHos1963asymmetric}, though practical applications may involve classes of graphs with useful structure; see Sec.~\ref{sec:measuresymmhardness} for 
additional %
discussion. 
Two vertices $u$ and $v$ are said to belong to the same orbit if there exists an automorphism $\sigma$ such that $\sigma(v)=u$, and %
orbits induce equivalence classes on the 
vertex set $V$. For an in-depth presentation %
of graph automorphisms and their properties the reader is referred to%
~\cite[Part Three]{biggs1993algebraic}.

Our results in the next section consider graph symmetries of objective functions, as well as classical symmetries more generally. Our application to \maxcut{} in Sec.~\ref{sec:learningtopsection} considers graph automorphisms as well as several related notions of graph symmetry described therein. %

\section{Classical symmetries in QAOA}\label{sec:theory}

Various general notions of symmetry for quantum algorithms are possible, and different notions may be appropriate in different applications. 
When an underlying classical function is concerned, for example an objective function over a set of configurations on a graph, a natural direction to explore is what its symmetries from the classical perspective 
can tell us about derived quantum algorithms such as QAOA. We consider classical symmetries of the objective function (i.e., problem Hamiltonian) and show how they relate to symmetries of the QAOA ansatz and its resulting measurement outcome probabilities. In particular we consider the special case of graph symmetries.
While in general it %
may not be possible to efficiently obtain the full symmetry group of the objective function, %
in the case of graph symmetries for many (but not all) practically interesting classes of  problems 
there exist polynomial-time algorithms or fast solvers \rev{that lack performance guarantees, but compute the full 
graph automorphism symmetry group quickly in practice}; we elaborate on these considerations in Sec.~\ref{sec:measuresymmhardness}. An appealing aspect of our results is that they apply regardless, i.e., when only some subgroup of classical symmetries %
is known. 

For QAOA, symmetries relate to both the phase and mixing operators (and hence to the problem and mixing Hamiltonians). For classical symmetries shared by both operators,
Theorem~\ref{thm:qaoasym} (and Lemma~\ref{lem:qaoasymgeneral} more generally) shows 
how such symmetries %
restrict  
the QAOA dynamics: %
\emph{%
QAOA amplitudes in the computational basis  are always the same for states connected by the %
the group of such symmetry transformations, and hence likewise for measurement probabilities.}  \rev{In other words,  the classical symmetries of the optimization problem give rise to %
symmetries of the QAOA dynamics.} Our results are general and are applicable to a wide class of optimization problems. %
Note that useful symmetries may arise from the problem encoding, the particular problem class or instance, or the quantum states and operators themselves. Although we focus
in this paper on QAOA with the standard transverse-field mixer and initial state,
our results may be similarly extended to more general mixing operators with potentially different symmetries, such as those described in~\cite{hadfield2017quantum,bartschi2020grover}, %
or to different choices of QAOA initial state.

We now state our main theorem, followed by several useful corollaries,  %
as well as an explicit example of QAOA symmetry for an instance of MaxCut.  The proof of the theorem is given using a more general lemma
presented in Sec.~\ref{sec:lemmas}.

\begin{theorem}\label{thm:qaoasym}
Consider a depth-$p$ QAOA state $\ket{\vec{\beta}, \vec{\gamma}}_p %
= U_B(\beta_p)U_C(\gamma_p)\ldots U_B(\beta_1)U_C(\gamma_1)\ket{s}$ with objective Hamiltonian $C$, transverse-field mixing Hamiltonian $B$, and initial state $\ket{s}=\ket{+}^{\otimes n}$.
\begin{itemize}
    \item %
    Given a classical symmetry matrix $A\in S_{2^n}$ implementing the permutation $A\ket{x}=\ket{a(x)}$ for each  $x\in\{0,1\}^n$ such that  (i) $[A,C]=0$ and (ii)  $[A,B]=0$, then for all $p,\vec{\beta},\vec{\gamma}$, the measurement probability of all bit strings connected by %
    $A$ is the same:   
\begin{equation}
   \forall x\in \{0,1\}^n \;\;\;\;\;\;\;\ \Prob_p(x) = \Prob_p(a(x)).
 \end{equation}  
 \item %
 Suppose we know a number of classical symmetries $A_1,\dots A_\ell \in S_{2^n}$ such that  (i) $[A_j,C]=0$ and (ii) $[A_j,B]=0$ for each $j=1,\dots,\ell$. %
Then taking matrix products of the $A_j$ generates the %
QAOA symmetry group
\begin{equation}
    \mathcal{A}=\{A_0=I,A_1,\dots A_\ell, A_{\ell+1}\dots A_{\ell'}\}\subset S_{2^n},
\end{equation} %
which acts as $A_j\ket{x}=\ket{a_j(x)}$ for each $x\in\{0,1\}^n$, $A_j\in\mathcal{A}$, and
for all $p,\vec{\beta},\vec{\gamma}$, %
\begin{equation}
   \forall x\in \{0,1\}^n, \;\;
    \forall A_j\in \mathcal{A} \;\;\;\;\;\;\;\ \Prob_p(x) = \Prob_p(a_j(x)),
 \end{equation}  
  i.e., probabilities are the same across all strings in the orbit $\mathcal{A}\cdot x$.
\item %
In either case a symmetry $A\in S_{2^n}$ %
satisfies %
$[A,C]=0=[A,B]$ if and only if  (i') $c(a(x))=c(x)$ and (ii') $\{a(x^{(j)}):j=1,\dots,n\}=\{a(x)^{(j)}:j=1,\dots n\}$
as subsets of $\{0,1\}^n$, where $y^{(j)}$ denotes the bitstring $y$ with its $j$th bit flipped. 
\end{itemize}
\end{theorem}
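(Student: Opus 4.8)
The plan is to handle the three bullets in turn, with the first two flowing from a single structural observation and the third from a direct computation in the computational basis. The crux for the probability statements is that conditions (i) and (ii) force the symmetry $A$ to commute not merely with $C$ and $B$ but with the entire QAOA evolution, so that $A$ stabilizes the prepared state; equality of amplitudes on $A$-related strings is then immediate.

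For the first bullet I would first note that since $U_C(\gamma)=e^{-i\gamma C}$ and $U_B(\beta)=e^{-i\beta B}$ are (entire) functions of $C$ and $B$, the hypotheses $[A,C]=0$ and $[A,B]=0$ upgrade to $[A,U_C(\gamma)]=0$ and $[A,U_B(\beta)]=0$ for every $\gamma,\beta$. Writing $\ket{\psi}:=\ket{\vec{\beta},\vec{\gamma}}_p$, I would then commute $A$ rightward through the alternating product $U_B(\beta_p)U_C(\gamma_p)\cdots U_B(\beta_1)U_C(\gamma_1)$ one factor at a time until it reaches $\ket{s}$, and use the background fact $A\ket{s}=\ket{s}$ to conclude $A\ket{\psi}=\ket{\psi}$. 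Since $A$ is a permutation matrix, $A^\dagger A=I$ gives $A^\dagger\ket{\psi}=\ket{\psi}$ as well, whence $\braket{a(x)|\psi}=\bra{x}A^\dagger\ket{\psi}=\braket{x|\psi}$; taking squared moduli yields $\Prob_p(a(x))=\Prob_p(x)$ for every $x$ and every choice of $p,\vec{\beta},\vec{\gamma}$.

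The second bullet reduces to a closure argument plus a pointwise application of the first. I would observe that the set of all matrix products of $A_1,\dots,A_\ell$ is a subgroup $\mathcal{A}\subseteq S_{2^n}$: it is finite because $S_{2^n}$ is, closed because products of permutation matrices are permutation matrices, and contains inverses by finiteness. Because commuting with a fixed operator is preserved under products (if $[A,C]=0=[A',C]$ then $[AA',C]=A[A',C]+[A,C]A'=0$, and likewise for $B$), every $A_j\in\mathcal{A}$ again satisfies (i) and (ii). Applying the first bullet to each such $A_j$ gives $\Prob_p(x)=\Prob_p(a_j(x))$, i.e.\ $\Prob_p$ is constant on each orbit $\mathcal{A}\cdot x$.

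For the third bullet I would work directly in the computational basis, using $A\ket{x}=\ket{a(x)}$, the diagonal action $C\ket{x}=c(x)\ket{x}$, and $B\ket{x}=\sum_{j}\ket{x^{(j)}}$. For (i) $\Leftrightarrow$ (i'), comparing $AC\ket{x}=c(x)\ket{a(x)}$ with $CA\ket{x}=c(a(x))\ket{a(x)}$ shows that $[A,C]=0$ holds for all $x$ exactly when $c(a(x))=c(x)$, which is (i'). For (ii) $\Leftrightarrow$ (ii'), I would compute $AB\ket{x}=\sum_j\ket{a(x^{(j)})}$ and $BA\ket{x}=\sum_j\ket{a(x)^{(j)}}$ and compare. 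I expect the one point deserving care---the main, if minor, obstacle---to be justifying that equality of these two superpositions is equivalent to equality of their index sets. This follows because $a$ is a bijection and, for fixed $x$, the single-bit-flip neighbors $\{x^{(j)}\}_{j=1}^{n}$ (respectively $\{a(x)^{(j)}\}_{j=1}^{n}$) are $n$ distinct strings; hence both sides are unit-coefficient sums of $n$ distinct orthonormal basis vectors, and two such sums coincide iff the underlying sets of basis labels coincide, giving $\{a(x^{(j)}):j\}=\{a(x)^{(j)}:j\}$, which is (ii'). Collecting the two equivalences proves the biconditional.
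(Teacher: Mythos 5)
Your proof is correct and follows essentially the same route as the paper: commutation with $C$ and $B$ is upgraded via power series to commutation with the exponentials $U_C(\gamma)$, $U_B(\beta)$, the symmetry is pulled through the circuit onto the invariant initial state $\ket{s}$ to obtain amplitude (hence probability) equality, group closure under products handles the second bullet, and a direct computational-basis calculation (including the ``sets of $n$ distinct strings'' point) gives the third. The only organizational difference is that the paper factors the amplitude argument through a slightly more general standalone lemma (stated for possibly restricted parameter sets and phrased via $A^\dagger$ commuting, rather than your equivalent observation that $A$ stabilizes the prepared state), but the mathematical content is identical.
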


We remark that for the transverse field mixer $B = \sum_j X_j$, qubit permutations $A\in S_n\subset S_{2^n}$ always satisfy 
condition (ii) of the theorem, although $A\in S_n$ is not necessary. 
Corollary~\ref{cor:permenergy} of the Introduction then  follows trivially %
observing that the variable permutation symmetry of the objective function implies qubit permutation symmetry of the objective Hamiltonian.
An example satisfying $[A',B]=0$ but where $A'\notin S_n$ is $A'=X_1X_2\dots X_n\in S_{2^n}$ which flips all bits simultaneously. We emphasize that the theorem may be applied instance-wise or in some cases over a class of problem instances. 
For example, $A'$ applies to all instances of MaxCut, corresponding to the well-known $\mathbb{Z}_2$-symmetry of that problem, with 
additional symmetries arising from the %
 automorphisms of the particular graph instance (see Cor.~\ref{cor:graph} below).

Theorem~\ref{thm:qaoasym} and Corollary~\ref{cor:permenergy} consider %
general objective function symmetries. A practically %
important subset of such symmetries are ones that can be understood as the symmetries of some graph derived from the structure of the objective function and corresponding problem Hamiltonian.
Following for instance~\cite{hastings2012tlescommutingham}, we refer to this graph as the \emph{interaction graph} of a given problem instance. For example, for MaxCut and problems similarly defined over a set of configurations on a graph,
the interaction graph is typically the underlying graph on which the problem is defined.\footnote{Note that in cases where the terms of the objective polynomial do not all have equal coefficients (as they do in the case of, e.g., MaxCut), the relevant symmetries of the interaction graph are \emph{weighted} automorphisms (i.e., automorphisms preserving edge weights~\cite{Balasubramanian1994}). Obtaining the group of weighted automorphisms is in general more difficult than in the nonweighted case, and the introduction of weights makes the problem less likely to have nontrivial automorphisms; hence we do not consider them in further detail.} 
For an arbitrary polynomial objective function defined on the Boolean cube, we can define an interaction graph by fixing variable labeling, letting each variable be a vertex, and connecting two vertices by an edge if there is a term in the polynomial including the corresponding variables. %
For an arbitrary pseudo-Boolean function (including constraint satisfaction problems, commonly considered in QAOA) such polynomial representations exist and can be constructed by considering the function's Fourier expansion~\cite{hadfieldrepresentation}. %
Hence, Cor.~\ref{cor:permenergy} implies that symmetries $A\in S_n$ of the %
interaction graph %
always yield QAOA symmetries, although as explained not all symmetries of the objective function (problem Hamiltonian) are graph symmetries in general. We formalize this %
for the case of problems on graphs. 

\begin{corollary}\label{cor:graph}
Consider an $n$-vertex graph $G$ %
with some (label-independent) objective function $f$ defined on its vertex configurations~$\{0,1\}^n$. Suppose we know a subgroup of graph automorphisms $\mathcal{A}_G\subseteq \textrm{Aut}(G)\subseteq S_n\subset S_{2^n}$.  Then for all $p,\vec{\beta},\vec{\gamma}$ the corresponding QAOA states satisfy 
 \begin{equation}
   \forall x\in \{0,1\}^n \;\;
    \forall a_j\in\, \mathcal{A}_G \;\;\;\;\;\;\;\ \Prob_p(x) = \Prob_p(a_j(x)).
 \end{equation}  
Generally, $\textrm{Aut}(G)$ may be a proper subgroup of the full QAOA symmetry group $\mathcal{A}$.
\end{corollary}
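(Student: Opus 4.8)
The plan is to reduce the main (displayed) claim directly to Theorem~\ref{thm:qaoasym} by checking that every element of $\mathcal{A}_G$ satisfies its two commutation hypotheses, and then to establish the final remark by exhibiting an explicit QAOA symmetry that is not a graph automorphism. First I would verify condition (i), $[A_j,C]=0$, for each $a_j\in\mathcal{A}_G$. Because $f$ is label-independent, relabeling the vertices by a graph automorphism $a_j$ leaves the objective value unchanged, i.e.\ $f(a_j(x))=f(x)$ for all $x\in\{0,1\}^n$; recalling that under the vertex-per-qubit encoding a graph automorphism acts as the qubit permutation $A_j\in S_n\subset S_{2^n}$ with $A_j\ket{x}=\ket{a_j(x)}$, this is exactly the Hamiltonian condition $A_j^\dagger C A_j=C$, equivalently $[A_j,C]=0$ since $A_j$ is a permutation (hence unitary) matrix.

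Next I would dispatch condition (ii), $[A_j,B]=0$. Since $\mathcal{A}_G\subseteq S_n$ consists of qubit permutations, this is immediate from the remark following Theorem~\ref{thm:qaoasym} that any $A\in S_n$ commutes with the transverse-field mixer $B=\sum_j X_j$ (permuting the qubits merely relabels the single-qubit $X$ terms in the sum, leaving $B$ fixed). With both hypotheses in hand, the second bullet of Theorem~\ref{thm:qaoasym} applies to the generating set $\mathcal{A}_G$, yielding $\Prob_p(x)=\Prob_p(a_j(x))$ for every $x$, every $a_j\in\mathcal{A}_G$, and every $p,\vec{\beta},\vec{\gamma}$; equivalently, the output probabilities are constant on each orbit $\mathcal{A}_G\cdot x$. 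As $\mathcal{A}_G$ is already assumed to be a subgroup, this is just Corollary~\ref{cor:permenergy} specialized to graph automorphisms.

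Finally, for the proper-subgroup statement I would exhibit a concrete witness showing $\textrm{Aut}(G)\subsetneq\mathcal{A}$ can occur. The global bit-flip $A'=X_1X_2\cdots X_n\in S_{2^n}$ satisfies $[A',C_{\text{\maxcut{}}}]=0$ and $[A',B]=0$ (as noted after the theorem), so it lies in the full QAOA symmetry group $\mathcal{A}$ for every \maxcut{} instance, yet $A'\notin S_n$ and hence is induced by no graph automorphism. I do not expect a serious obstacle here: the entire content sits in Theorem~\ref{thm:qaoasym}, and the only point requiring genuine care is the faithful translation of \emph{label-independence of $f$} into the algebraic condition $[A_j,C]=0$, which relies on the chosen encoding making the vertex permutation $a_j$ correspond exactly to the computational-basis action of the permutation matrix $A_j$.
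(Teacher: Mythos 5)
Your proof is correct and follows essentially the same route as the paper: reduce to Theorem~\ref{thm:qaoasym} by checking conditions (i) and (ii) for each automorphism (as in Corollary~\ref{cor:permenergy}), then exhibit the global bit-flip $A'=X_1X_2\cdots X_n$ as the witness that $\textrm{Aut}(G)$ can be a proper subgroup of $\mathcal{A}$. You simply spell out the commutation checks that the paper's one-line proof leaves implicit.
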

\begin{proof}
Each permutation matrix $A\in S_n$ representing a graph automorphism satisfies the conditions of the theorem %
as in Corollary~\ref{cor:permenergy}. As mentioned the \maxcut{} symmetry $A'=X_1X_2\dots X_n$ of flipping all bits shows not all symmetries correspond to graph automorphisms. 
\end{proof}

\paragraph{Example: QAOA for \maxcut{} on $K_{3,3}$}
\begin{figure}[h]
\subfloat[$100000$]{\includegraphics[width=0.23\textwidth]{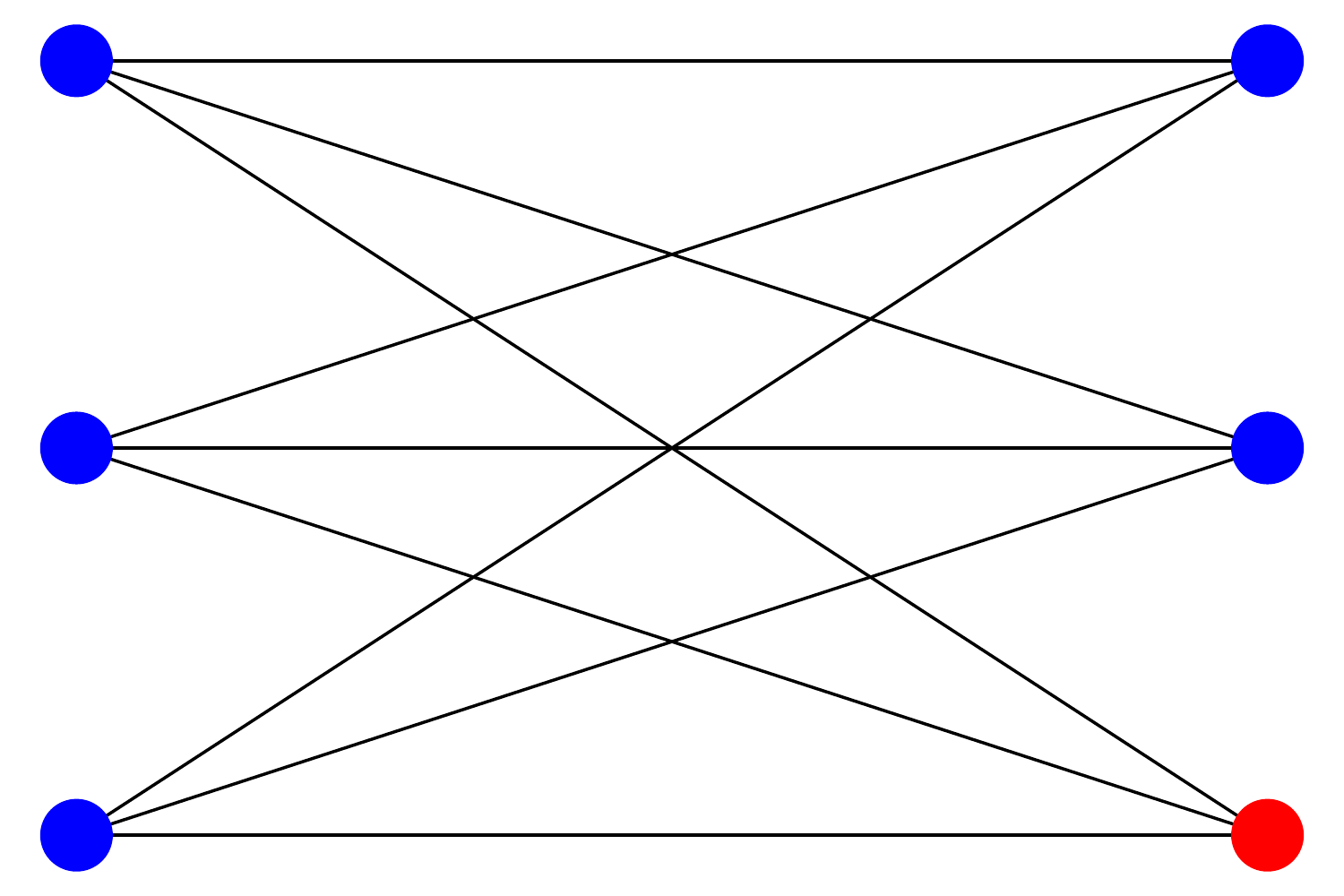}%
\label{fig:k33ham1n0}}
\hfill
\subfloat[$000001$]{\includegraphics[width=0.23\textwidth]{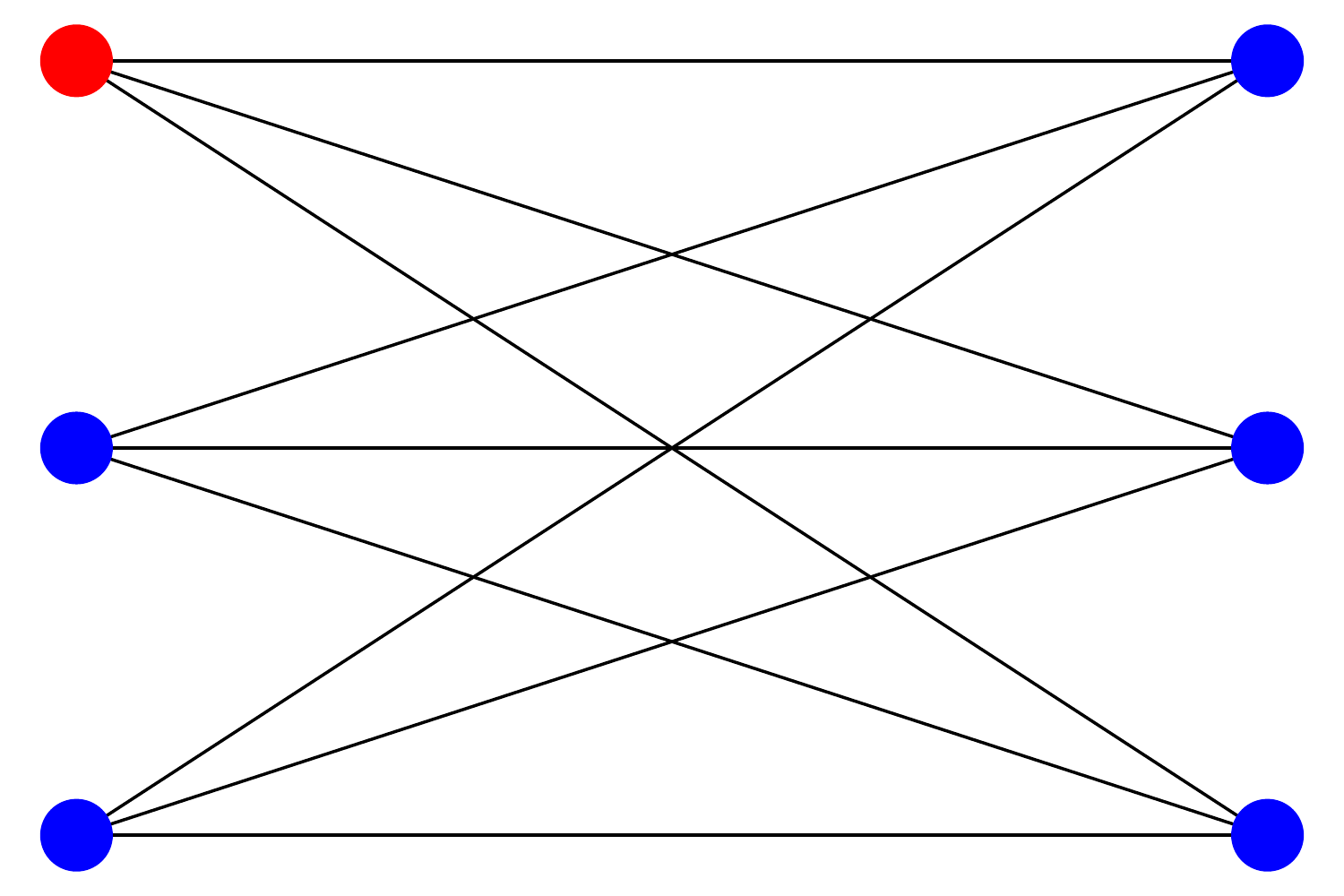}%
\label{fig:k33ham1n5}}
\hfill
\subfloat[$000111$]{\includegraphics[width=0.23\textwidth]{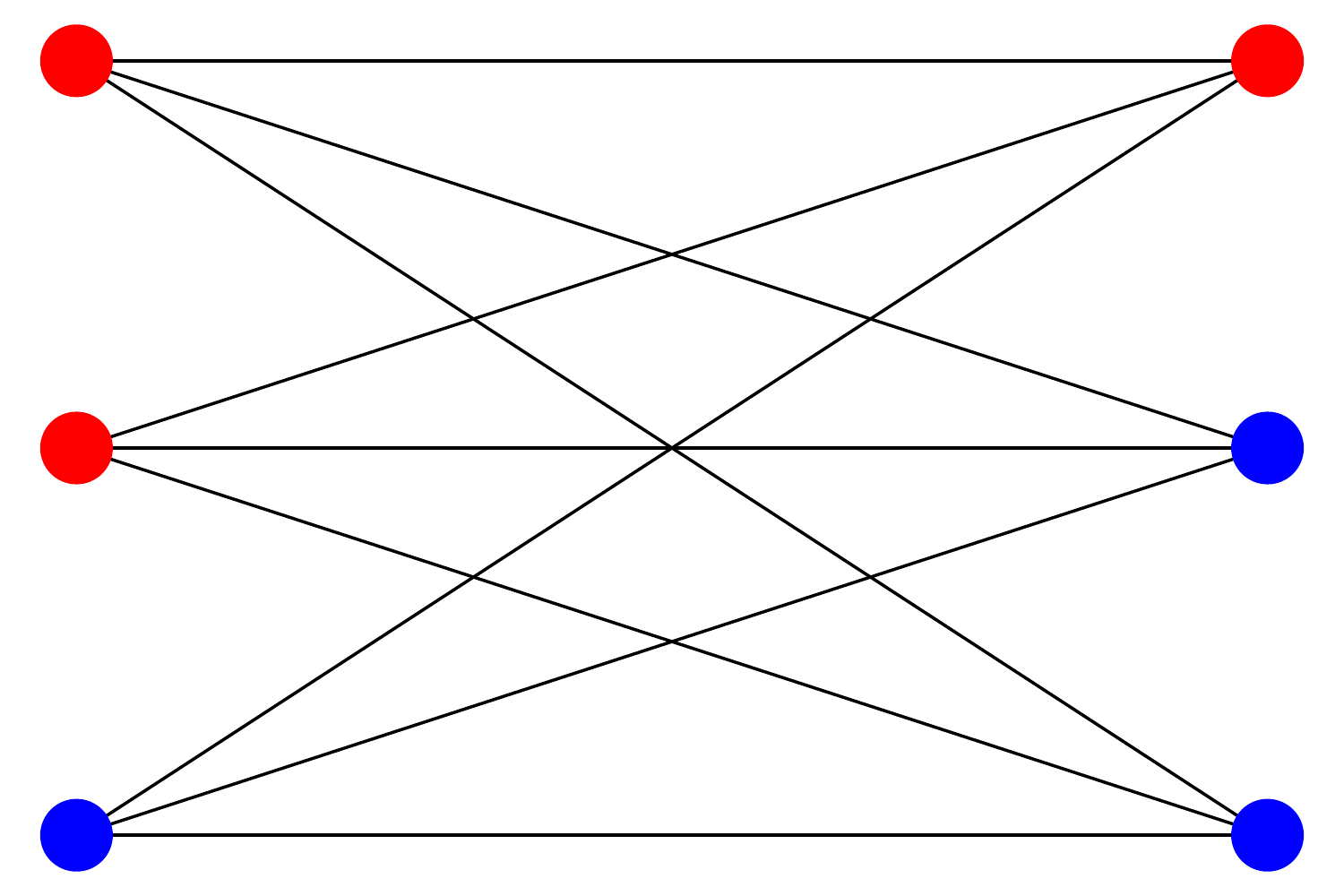}%
\label{fig:k33hamm3n1}}
\hfill
\subfloat[$010101$]{\includegraphics[width=0.23\textwidth]{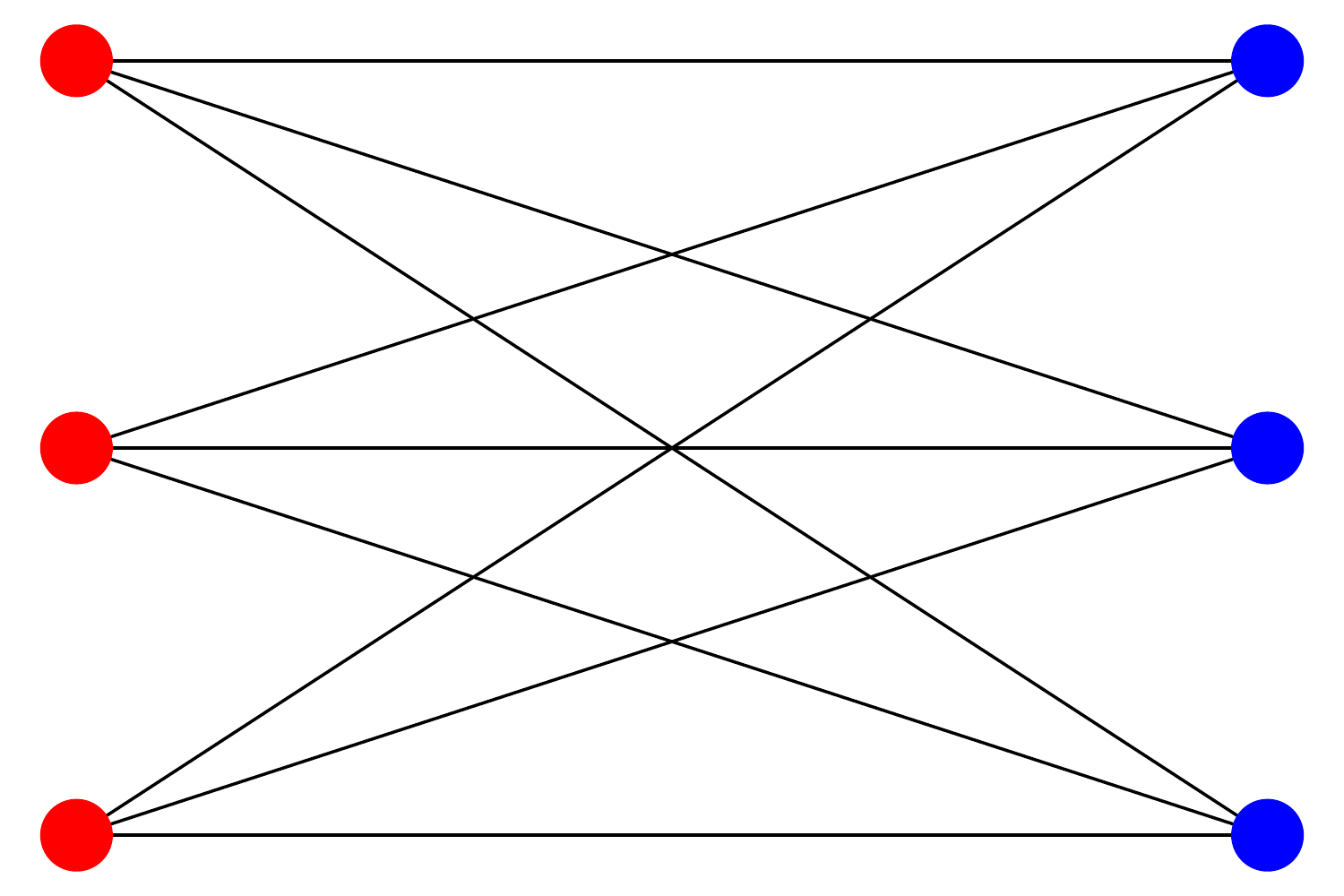}%
\label{fig:k33hamm3n2}}
\caption{\maxcut{} on $K_{3,3}$ graph: All nodes of the graph are on the same vertex orbit in $V$ (i.e., there exists an automorphism that takes one into another; for example, Fig.~\ref{fig:k33ham1n0} can be taken into Fig.~\ref{fig:k33ham1n5} by a vertical mirror symmetry applied after horizontal mirror symmetry).
Corollary~\ref{cor:graph} implies all bitstrings with Hamming weight $1$ have the same QAOA measurement probabilities. %
On the other hand, no graph automorphism  takes the red nodes in Fig.~\ref{fig:k33hamm3n1} into those of  Fig.~\ref{fig:k33hamm3n2}. %
}\label{fig:k33}
\end{figure}

To illustrate %
our results consider %
\maxcut{} on the 6-node complete bipartite graph $K_{3,3}$. This graph has a single vertex orbit (see Fig.~\ref{fig:k33}). Recall that vertex orbits induce orbits on bitstrings. Hence from Corollary~\ref{cor:graph} all bitstrings with  Hamming weight
 one will belong to the same orbit in $\{0,1\}^n$ and therefore have the same probabilities in QAOA state $\ket{\vec{\beta}, \vec{\gamma}}_p$, regardless of depth $p$ or the choice of parameters $\vec{\beta}, \vec{\gamma}$. %
At the same time, there does not exist a graph automorphism taking the %
assignment %
$000111$ to
$010101$, as may be anticipated by observing that Fig.~\ref{fig:k33hamm3n1} corresponds to fewer cut edges than does Fig.~\ref{fig:k33hamm3n2}, and so QAOA may yield different measurement probabilities for such pairs of bitstrings.

\subsection{Symmetry preserves probability amplitudes}
\label{sec:lemmas} %
More generally than the measurement probability viewpoint \rev{for QAOA} of Theorem~\ref{thm:qaoasym}, %
we show \rev{%
conditions sufficient for %
classical symmetries to lead to} 
identical probability amplitudes of %
computational basis states. \rev{Specifically, we show that for the %
probability amplitudes of computational basis states connected by a symmetry to be equal, it is sufficient for the symmetry operator to commute with the alternating operators and contain the initial state in its $+1$ eigenspace.}

\begin{lemma}\label{lem:qaoasymgeneral}
Consider depth-$p$ QAOA $\ket{\vec{\beta}, \vec{\gamma}}_p = U_B(\beta_p)U_C(\gamma_p)\ldots U_B(\beta_1)U_C(\gamma_1)\ket{s}$ with classical problem Hamiltonian $C$, mixing Hamiltonian $B$, and initial state $\ket{s}=\ket{+}^{\otimes n}$.

Suppose we know a symmetry  $A\in S_{2^n}$ 
acting as $A\ket{x} = \ket{a(x)}$, $x\in\{0,1\}^n$, 
such that
\begin{enumerate}[i)]
    \item $[A,U_C(\gamma)]=0, \;\forall\gamma\in R_{\gamma}\subset \mathbb{R}$ and
    \item $[A,U_B(\beta)]=0, \;\forall\beta\in R_{\beta}\subset \mathbb{R}$.
\end{enumerate}
Here $R_{\gamma}$ and $R_{\beta}$ are some (possibly discrete) sets of parameter values. Then solution probability amplitudes are invariant under $A$; that is,  for any $x\in\{0,1\}^n$, for all $p$ we have
\begin{equation}
    \braket{x | \vec{\beta}, \vec{\gamma}}_p = \braket{a(x) | \vec{\beta}, \vec{\gamma}}_p, \qquad \forall \vec{\gamma}\in R_{\gamma}^p, \vec{\beta}\in R_{\beta}^p,
   \end{equation}
 and hence, moreover, $\Prob_p(x) = \Prob_p(a(x))$. 
\end{lemma}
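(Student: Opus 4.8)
The plan is to show that the entire depth-$p$ QAOA evolution commutes with $A$, so that the prepared state lies in the $+1$-eigenspace of $A$, and then to read off the amplitude identity by moving $A^\dagger$ onto the computational-basis bra. First I would record the two ingredients already supplied in the background section: $A$ is a permutation matrix and is therefore unitary, $A^\dagger A = I$; and the uniform initial state is fixed, $A\ket{s} = \ket{s}$, since permuting bitstrings merely reindexes the equal-weight sum defining $\ket{+}^{\otimes n}$. With hypotheses (i) and (ii) in hand, writing $U = U_B(\beta_p)U_C(\gamma_p)\cdots U_B(\beta_1)U_C(\gamma_1)$ for the QAOA unitary at admissible parameters $\vec{\gamma}\in R_{\gamma}^p$, $\vec{\beta}\in R_{\beta}^p$, the fact that $A$ commutes with every individual factor immediately gives $[A,U]=0$. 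Applying this to $\ket{s}$ and using invariance of the initial state,
\[ A\ket{\vec{\beta},\vec{\gamma}}_p = A\,U\ket{s} = U\,A\ket{s} = U\ket{s} = \ket{\vec{\beta},\vec{\gamma}}_p, \]
so the QAOA state is itself invariant under $A$.

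Next I would extract the amplitude equality. Since $A\ket{x}=\ket{a(x)}$ implies $\bra{a(x)} = \bra{x}A^\dagger$, and since unitarity lets me upgrade $A\ket{\vec{\beta},\vec{\gamma}}_p = \ket{\vec{\beta},\vec{\gamma}}_p$ to $A^\dagger\ket{\vec{\beta},\vec{\gamma}}_p = \ket{\vec{\beta},\vec{\gamma}}_p$ (multiply the first identity by $A^\dagger$), I obtain, for every $x\in\{0,1\}^n$ and every admissible schedule,
\[ \braket{a(x)|\vec{\beta},\vec{\gamma}}_p = \bra{x}A^\dagger\ket{\vec{\beta},\vec{\gamma}}_p = \braket{x|\vec{\beta},\vec{\gamma}}_p. \]
Taking squared moduli then yields $\Prob_p(x) = \Prob_p(a(x))$, which is the stated conclusion.

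As for difficulty, there is no genuine technical obstacle here — the whole argument is a short commutation computation resting on the single observation that $A$ fixes $\ket{s}$. The \emph{main thing to be careful about} is bookkeeping rather than depth: I must make sure that commutation with each individual layer operator really transfers to their product (which holds because $A$ is one fixed operator, independent of the layer index, and matrix multiplication is associative), and I must track adjoints correctly, noting that invariance of the ket under $A$ is equivalent to invariance under $A^\dagger$ \emph{precisely} because $A$ is unitary. It is also worth pointing out explicitly where the restricted parameter sets enter: $R_{\gamma}$ and $R_{\beta}$ play a role only in that the commutation hypotheses need hold just at the parameter values actually used, which is why the conclusion is correctly quantified over $\vec{\gamma}\in R_{\gamma}^p$, $\vec{\beta}\in R_{\beta}^p$ rather than over all of $\mathbb{R}$.
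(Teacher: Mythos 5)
Your proof is correct and takes essentially the same approach as the paper's: both rest on the commutation hypotheses transferring from the individual factors $U_B(\beta_j)$, $U_C(\gamma_j)$ to the full product, on the invariance $A\ket{s}=\ket{s}$, and on unitarity of $A$ to handle the adjoint. The only cosmetic difference is that you first show the QAOA state is a $+1$-eigenvector of $A$ (and hence of $A^\dagger$) before applying the bra, whereas the paper conjugates the hypotheses to get $[A^\dagger,U_C(\gamma)]=[A^\dagger,U_B(\beta)]=0$ and moves $A^\dagger$ through the circuit directly onto $\ket{s}$.
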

\begin{proof}
Observe that since $A$ is unitary, conjugating \textit{i)} by \rev{$A^\dagger$} gives
$$A^\dagger[A,U_C(\gamma)]A^\dagger=\rev{A^\dagger(AU_C(\gamma) - U_C(\gamma)A)A^\dagger =U_C(\gamma)A^\dagger- A^\dagger U_C(\gamma)=}[A^\dagger,U_C(\gamma)]=0$$ 
for all $\gamma\in R_{\gamma}$, and similarly \textit{ii)} implies
$$\rev{A^\dagger[A,U_B(\beta)]A^\dagger=}[A^\dagger,U_B(\beta)]=0$$
for all $\beta\in R_{\beta}$\rev{, i.e., $A^\dagger$ also commutes with $U_B(\beta)$ and $U_C(\gamma)$.} 
Computing QAOA amplitudes %
using the assumptions gives %
\[
\begin{aligned}
 \braket{a(x) | \vec{\beta}, \vec{\gamma}} & = %
 \bra{x}A^{\dagger} U_B(\beta_p)U_C(\gamma_p)\ldots U_B(\beta_1)U_C(\gamma_1)\ket{s} \\
 & = \bra{x}U_B(\beta_p)U_C(\gamma_p)\ldots U_B(\beta_1)U_C(\gamma_1)A^{\dagger}\ket{s} \,\,= \braket{x | \vec{\beta}, \vec{\gamma}}\\ 
\end{aligned}
\]
for any $\gamma_i\in R_{\gamma}$, $\beta_i\in R_{\beta}$, $i = 1,\ldots, p$, as claimed. \rev{Here we have used the commutation assumptions for the first equality and $A^\dagger\ket{s}=\ket{s}$ for the second one.} 
\rev{Following Born's rule,} taking the absolute value squared of each side gives the probability result. 
\end{proof}

The conditions of Lemma~\ref{lem:qaoasymgeneral} are slightly more general than those of Theorem~\ref{thm:qaoasym}; in particular, the conditions $[A,C]=0$ and $[A,B]=0$ of the theorem imply $[A,U_C(\gamma)]=0$ and $[A,U_B(\beta)]=0$ for all $\gamma,\beta\in\mathbb{R}$. 
Using the lemma we now give the proof of Theorem~\ref{thm:qaoasym}. %

\begin{proof}[Proof of Theorem~\ref{thm:qaoasym}.]
We prove the more general second case, which suffices to prove the first.  
First observe that the conditions \textit{i)} and \textit{ii)} of the theorem are sufficient to satisfy those of Lemma~\ref{lem:qaoasymgeneral}; this is easily seen by  expanding the matrix exponential as a power series and using the linearity of the commutator.  
Next observe, trivially, that composing any two permutations gives a permutation, and that 
since $S_{2^n}$ is a group of finite order, iterating any fixed permutation eventually results in the identity. 
It follows that given any subset of $S_{2^n}$, composing elements %
yields a subgroup (if not the whole group). Moreover, if two symmetry operators $A_k,A_l$ satisfy the conditions of Lemma~\ref{lem:qaoasymgeneral}, then clearly so does their product $A_kA_l$. 
Therefore, given any such set of symmetry operators $A_1,A_2,\dots$, the conditions of Lemma~\ref{lem:qaoasymgeneral} are satisfied for each  element of the group generated by these operators, which implies the statements of %
the first two cases of the theorem. 

For the final point of the theorem,
using the unitarity of $A$, we have $[A,C]=0$ if and only if $A^\dagger C A\ket{x}=C\ket{x}$ for all $x\in\{0,1\}^n$, or equivalently $c(a(x))\ket{x}=c(x)\ket{x}$, and hence true if and only if we have $(i')$ $c(x)=c(a(x))$ for each $x\in\{0,1\}^n$. 
Similarly, $[A,B]=0$ if and only if $A^\dagger B A\ket{x}=B\ket{x}$ for each $x\in\{0,1\}^n$. Using that~$B$ acts as on computational basis states as $B\ket{y}=\sum_{j=1}^n\ket{y^{(j)}}$ gives 
\begin{eqnarray*}
A^\dagger B A\ket{x} = A^\dagger \sum_{j=1}^n \ket{a(x)^{(j)}} 
=  \sum_{j=1}^n \ket{a^{-1}(a(x)^{(j)})}. 
\end{eqnarray*}
Therefore, for $A^\dagger B A\ket{x}=B\ket{x}$ to hold, we must have 
$ \{x^{(j)}:j\in[n]\}=\{a^{-1}(a(x)^{(j)}):j\in[n]\}$
as sets of size $n$. Applying $a(\cdot)$ to both sets gives %
$(ii')$, which proves the claim.
\end{proof}

\begin{remark}
Since $p$ is arbitrary in both Lemma~\ref{lem:qaoasymgeneral} and Theorem~\ref{thm:qaoasym}, the results therein also apply to all intermediate states between QAOA layers.
\end{remark}

\subsection{Reduced dynamics from symmetry and applications to classical simulation}
\label{sec:appsimulation}

In principle, each known QAOA symmetry %
allows us to reduce the size of the Hilbert space necessary to exactly simulate QAOA. If a large enough symmetry group is known, the dimension of the reduced subspace may become only polynomially large in the problem input size, in which case efficient classical simulation may be possible, assuming one can efficiently compute the necessary matrix elements. In this section we formalize this observation.

We first recall some additional %
basic group theory beyond that of Sec.~\ref{sec:background}; see~\cite{rotman2015advanced} for a detailed overview.  
Let $\mathbb{B} := \{0,1\}^n$ denote the Boolean cube that is identified with the set of computational basis states $\{\ket{x}:x\in \mathbb{B}\}$, and consider a QAOA symmetry group $\mathcal{A}\subset S_{2^n}$ as in  Theorem~\ref{thm:qaoasym} (i.e., for some fixed problem and objective Hamiltonian~$C$). Recall that the action of $\mathcal{A}$ induces equivalence classes on $\mathbb{B}$, where each class includes all elements $x\in\mathbb{B}$ belonging to the same orbit, denoted $\mathcal{A}\cdot x$ for a particular~$x$.  %
The quotient $\mathbb{B} / \mathcal{A}$ %
gives the set of equivalence classes, %
called coinvariants of $\mathcal{A}$, with $|\mathbb{B} / \mathcal{A}|$ the number of disjoint orbits. Invariants, on the other hand, are fixed points, %
defined for each $A\in\mathcal{A}$ as $\mathbb{B}^{A}:=\{x \in \mathbb{B}: A \cdot x=x\}$. %
Similarly, for each $x\in\mathbb{B}$ its stabilizer subgroup is $\mathcal{A}_{x}:=\{A \in \mathcal{A}: A \cdot x=x\}$. 

We now %
state our result that with symmetry QAOA dynamics can be reduced to a Hilbert space of dimension $|\mathbb{B} /\mathcal{A}|$, which can be expressed as the average number of $x \in \mathbb{B}$ fixed by 
each $A \in \mathcal{A}$, or equivalently the average size of the stabilizer subgroups. 

\begin{theorem}\label{thm:qaoareduced}
Consider a QAOA symmetry group $\mathcal{A}\subset S_{2^n}$ %
satisfying the conditions of Theorem~\ref{thm:qaoasym}. Then QAOA dynamics reduce %
to a subspace of dimension $|\mathbb{B} /\mathcal{A}|$ satisfying  %
\begin{equation}\label{eq:thmqaoasim}
    \left|\mathbb{B} /\mathcal{A}\right|\,=\,\frac{1}{|\mathcal{A}|} \sum_{A \in \mathcal{A}}\left|\mathbb{B}^{A}\right|\,=\,\frac{1}{|\mathcal{A}|} \sum_{x \in \mathbb{B}}\left|\mathcal{A}_{x}\right|\,=\,\sum_{x \in \mathbb{B}} \frac{1}{|\mathcal{A}\cdot x|}.
\end{equation}  
\end{theorem}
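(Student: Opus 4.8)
The plan is to separate the statement into its two distinct pieces: the three displayed equalities, which are classical facts about finite group actions (Burnside's lemma and the orbit--stabilizer theorem), and the dimension-reduction claim, which is the genuinely QAOA-specific content. I would organize the argument around a single group-averaging projector so that one computation delivers both the dimension claim and the first equality at once. The first step is to establish that the \emph{entire} QAOA trajectory lies in the common $+1$ eigenspace $\mathcal{H}_{\mathcal{A}}:=\{\ket{\psi}:A\ket{\psi}=\ket{\psi}\ \forall A\in\mathcal{A}\}$. Fixing any $A\in\mathcal{A}$, the hypotheses inherited from Theorem~\ref{thm:qaoasym} guarantee (via Lemma~\ref{lem:qaoasymgeneral}) that $A$ commutes with every $U_B(\beta_k)$ and $U_C(\gamma_k)$, so I can commute $A$ rightward through the whole alternating product until it reaches the initial state, where $A\ket{s}=\ket{s}$; hence $A\ket{\vec{\beta},\vec{\gamma}}_p=\ket{\vec{\beta},\vec{\gamma}}_p$. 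Since this holds for all $p,\vec{\beta},\vec{\gamma}$ and for every intermediate state, the dynamics are confined to $\mathcal{H}_{\mathcal{A}}$.

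Next I would compute $\dim\mathcal{H}_{\mathcal{A}}$ through the averaging operator $P:=\tfrac{1}{|\mathcal{A}|}\sum_{A\in\mathcal{A}}A$. Group closure (as $B$ ranges over $\mathcal{A}$ so does $AB$) gives $\sum_{B\in\mathcal{A}}AB=|\mathcal{A}|P$ and therefore $P^2=P$, so $P$ is idempotent; moreover $AP=P$ shows its image is exactly $\mathcal{H}_{\mathcal{A}}$. For an idempotent, rank equals trace, so $\dim\mathcal{H}_{\mathcal{A}}=\operatorname{rank}P=\operatorname{tr}P=\tfrac{1}{|\mathcal{A}|}\sum_{A\in\mathcal{A}}\operatorname{tr}A$. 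Each $A$ is a permutation matrix in the computational basis, whose trace counts precisely its fixed basis states, $\operatorname{tr}A=|\mathbb{B}^{A}|$. Independently, the orbit-sum vectors $\sum_{y\in\mathcal{A}\cdot x}\ket{y}$ (one per orbit, with disjoint supports) form a basis of $\mathcal{H}_{\mathcal{A}}$, so $\dim\mathcal{H}_{\mathcal{A}}=|\mathbb{B}/\mathcal{A}|$. Combining these identifications yields both the reduction to a $|\mathbb{B}/\mathcal{A}|$-dimensional subspace and the first equality $|\mathbb{B}/\mathcal{A}|=\tfrac{1}{|\mathcal{A}|}\sum_{A}|\mathbb{B}^{A}|$.

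The remaining two equalities are pure counting. For the second, I would double-count the set of fixed pairs $\{(A,x)\in\mathcal{A}\times\mathbb{B}:A\cdot x=x\}$: summing over $A$ first gives $\sum_{A}|\mathbb{B}^{A}|$, while summing over $x$ first gives $\sum_{x}|\mathcal{A}_{x}|$, so the two sums coincide and dividing by $|\mathcal{A}|$ gives the claim. For the third, I would invoke the orbit--stabilizer theorem $|\mathcal{A}_{x}|\,|\mathcal{A}\cdot x|=|\mathcal{A}|$, which rearranges termwise to $|\mathcal{A}_{x}|/|\mathcal{A}|=1/|\mathcal{A}\cdot x|$, and then sum over $x\in\mathbb{B}$.

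The hardest part will not be any single calculation, since idempotency, trace-equals-fixed-points, the double count, and orbit--stabilizer are all routine. Rather, the delicate step is making the logical bridge cleanly: verifying that $P$ projects \emph{exactly} onto $\mathcal{H}_{\mathcal{A}}$ and that $\operatorname{rank}P$ genuinely equals the number of orbits (rather than some larger invariant dimension), and then confirming that the full QAOA evolution — not merely the final state — remains inside this subspace. Once those identifications are pinned down, the three equalities follow immediately and the dimension-reduction conclusion is a direct consequence.
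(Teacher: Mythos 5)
Your proof is correct, and it reaches the same conclusion by a route that differs from the paper's in one substantive way. The QAOA-specific step is essentially identical: like the paper, you push each symmetry operator through the alternating product onto the initial state (this is exactly the content of Lemma~\ref{lem:qaoasymgeneral}); you phrase the conclusion as state invariance, $A\ket{\vec{\beta},\vec{\gamma}}_p=\ket{\vec{\beta},\vec{\gamma}}_p$, while the paper phrases it as equality of amplitudes within each orbit --- the two are equivalent, since a vector fixed by every $A\in\mathcal{A}$ is precisely one whose amplitudes are constant on orbits. Where you genuinely diverge is in the counting: the paper simply cites Burnside's lemma for the first equality of \eqref{eq:thmqaoasim}, whereas you rederive it via the group-averaging operator $P=\tfrac{1}{|\mathcal{A}|}\sum_{A\in\mathcal{A}}A$, using idempotency, $\operatorname{rank}=\operatorname{tr}$, and $\operatorname{tr}A=|\mathbb{B}^A|$, together with the orbit-sum basis of the invariant subspace. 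This costs a little more writing but buys something the paper's citation does not: an explicit projector onto the reduced space and a concrete basis for it (the symmetric orbit sums), which is exactly the structure the paper later uses informally in its $\mathbb{Z}_2$ and complete-graph examples, and it makes transparent that $U_B$ and $U_C$ preserve the subspace (they commute with $P$). The second and third equalities (double counting of fixed pairs, and orbit--stabilizer with Lagrange) match the paper's argument exactly. One pedantic point: $AP=P$ for all $A$ only gives $\operatorname{im}P\subseteq\mathcal{H}_{\mathcal{A}}$; you should also note that $P$ acts as the identity on $\mathcal{H}_{\mathcal{A}}$ to get the reverse inclusion, but this is immediate from the definition of $P$.
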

Here the three equalities of (\ref{eq:thmqaoasim}) are stated for convenience, since a particular form may be easier to apply to a particular symmetry group. We consider several examples below. 
\begin{proof} We again make use of the isomorphism between the Boolean cube and computational basis states. 
From Lemma~\ref{lem:qaoasymgeneral}, all %
basis states within the same orbit
$\mathcal{A} \cdot|x\rangle:=\{A|x\rangle: A \in \mathcal{A}\}$ %
have the same probability amplitude at all steps of the QAOA algorithm for all possible algorithm parameters, and hence it suffices to track a single amplitude for each orbit. %
For the cardinality, since $|\mathcal{A}|,|\mathbb{B}|$ are finite,  
the first equality of (\ref{eq:thmqaoasim}) is %
Burnside's lemma~\cite{rotman2015advanced}, the second equality follows observing  $\sum_{A\in \mathcal{A}} |\mathbb{B}^A|= |\{(A, x) \in \mathcal{A} \times \mathbb{B}: A \cdot x = x\}|=\sum_{x\in \mathbb{B}} |\mathcal{A}_x|$, %
and the final equality follows observing  $|\mathcal{A}_x| =|\mathcal{A}|/|\mathcal{A}\cdot x|$ for each $x \in \mathbb{B}$ %
from Lagrange's theorem 
and the orbit-stabilizer theorem. 
\end{proof}

For example, consider a fixed $x\in \mathbb{B}$. If $\mathcal{A}\cdot x = \mathbb{B}$, that is, the orbit of $x$ gives all bitstrings, then the third equality of (\ref{eq:thmqaoasim}) implies $|\mathbb{B} /\mathcal{A} |%
=|\mathbb{B}|/|\mathbb{B}|=1$, and so QAOA reduces to one-dimensional (i.e., trivial) dynamics. Indeed, in this case $\mathcal{A}=S_{2^n}$, which implies that the objective function (objective Hamiltonian) is constant. 
If instead the objective function depends only on the Hamming weight so that $\mathcal{A}=S_{n}$, the group of qubit permutations, then clearly the orbit of each $x$ contains all strings of the same Hamming weight, and so $|\mathbb{B} /\mathcal{A}|=n+1$.
Clearly, \lq\lq too much\rq\rq\ symmetry may correspond to easier instances for classical algorithms or even make the problem trivial to solve, as these examples illustrate.
Conversely, for problems without any (known) classical symmetries, we have the trivial symmetry group $\mathcal{A}=\{I\}$, and hence $|\mathbb{B} /\mathcal{A}|=2^n$; in other words, there is no implied reduction in the required Hilbert space size. We consider several less-trivial examples related to MaxCut below.  

As implied by the first two examples above, for problems with sufficient classical symmetry the dimension of the effective Hilbert space may be reduced from $2^n$ to merely a polynomial in $n$. We emphasize, however,  that such a reduction does not necessarily entail classical simulability of the corresponding QAOA circuit.\footnote{%
General efficient classical simulation of even depth $1$ QAOA circuits is impossible under standard computational complexity assumptions~\cite{farhi2016quantum} 
(including, of course, worst-case problem instances with little or no symmetry).} 
In order %
for such symmetry reductions to imply efficient classical simulaton, 
one must know or be able to compute %
various specific details, %
such as the %
relevant matrix elements.%
\footnote{A useful example for comparison is Grover's quantum algorithm for unstructured search; while it is well known that similar considerations reduce the quantum dynamics to a $2$-dimensional Hilbert space, this insight does not allow one to obtain the solution (without the quantum computer). The algorithm's quadratic speedup over classical computers is typically proven by using this reduction. A similar observation is made for a generalization of QAOA with mixing operator derived from Grover's algorithm in~\cite{bartschi2020grover}. }
While this is often the case for toy problems of interest (for example, the Hamming weight ramp problem~\cite{bapat2019hammingramp}), we leave as a topic of future work a detailed  application of Theorem~\ref{thm:qaoareduced} towards deriving more rigorous %
conditions and settings allowing for efficient classical simulation of QAOA circuits. %

Indeed, such investigations may benefit from more technical generalizations of Theorems~\ref{thm:qaoasym} and~\ref{thm:qaoareduced}. We emphasize that the statements and proofs of both theorems depend %
in an essential way on the particular choice of QAOA initial state as the uniform superposition~$\ket{s}=\ket{+}^{\otimes n}$, which guarantees that all of the computational basis states in each symmetry group orbit always have the same probability amplitudes.\footnote{More precisely, this particular state projects entirely to the one-dimension representative (symmetric superposition) corresponding to each element of the quotient space $\mathbb{B}/\mathcal{A}$.} Both theorems may be generalized to arbitrary QAOA initial states, but at the expense of a more complicated presentation in terms of group representation theory beyond the scope of this study. 
Similar considerations apply for extending our results to different mixing operators beyond the %
transverse-field QAOA mixing Hamiltonian.

We conclude the section with three illustrative examples. 
\paragraph{Example: QAOA for problems with $\mathbb{Z}_2$ symmetry} Consider an %
objective function that is invariant under the $\mathbb{Z}_2$ symmetry~$A=X_1X_2\dots X_n$ of flipping all bits simultaneously. We recall, for example, that MaxCut %
always has this symmetry; %
other problems may have this symmetry by construction or only for particular instances. For the symmetry group $\{I,A\}$ clearly we have $\mathbb{B}^{I}=\mathbb{B}$ and $\mathbb{B}^{A}=\emptyset$, and so the first equation of Theorem~\ref{thm:qaoareduced} gives $|\mathbb{B}/\mathcal{A}|=\frac12(2^n+0)=2^{n-1}$; that is, as one may expect, the Hilbert space dimension may be reduced in half. In this case it suffices to consider the set of basis states $\{\tfrac1{\sqrt{2}}(\ket{x}+A\ket{x})\}$ given by the symmetric linear combinations of the two computational basis states in each orbit. 

\paragraph{Example: QAOA for MaxCut on general graphs}
For an arbitrary \maxcut{} instance on a graph $G$ with automorphism group~$Aut(G)$, Theorem~\ref{thm:qaoareduced} implies that the QAOA dynamics reduce to a subspace of dimension $|\mathbb{B}/Aut(G)|$, with basis states given by symmetric sums of the states in each orbit $\mathbb{B}^{Aut(G)}$, or similarly of  dimension $|\mathbb{B}/\mathcal{A}|$ if only a subgroup $\mathcal{A}\subset Aut(G)$ is known. 

Further consider that MaxCut also satisfies the $\mathbb{Z}_2$-symmetry~$A=X_1X_2\dots X_n$ of the preceding example. For each $A_j \in \mathcal{A}\subset Aut(G)$ one can easily  see that $AA_j=A_jA$, and hence  $\mathcal{A}\cup\{A\}$ generates the subgroup $\mathcal{A}'=\mathcal{A}\cup A\mathcal{A}=\{A_1=I,A_2,\dots,A_{|\mathcal{A}|},A,AA_2,\dots,AA_{|\mathcal{A}|}\}\subset S_{2^n}$ of order $2|\mathcal{A}|$, which reduces the effective Hilbert space dimension by a further factor of 2.  

\paragraph{Example: QAOA for MaxCut on complete graphs} Consider MaxCut on the complete graph $G=K_n$ with $n$ vertices and $\binom{n}{2}$ edges. 
Clearly, for complete graphs the MaxCut objective function $f(x)$ is invariant under permutations of the vertices, corresponding to $S_n$ permutation symmetry of the qubit encoding, and the dynamics can again be reduced to an $(n+1)$-dimensional subspace.  
Indeed, a simple calculation shows that the 
objective function may be written as $f(x)=f(|x|)= n|x|-|x|^2$ %
where $|\cdot|$ is the Hamming weight. Let $D$ denote the Hamiltonian representation of the Hamming weight function, that is, $D|x\rangle=|x||x\rangle$ for all $x \in \mathbb{B}$, which is given by the $1$-local Hamiltonian $D=\frac{n}2I-\sum_{j=1}^nZ_j$. 
Then the objective Hamiltonian may be written $C=nD-D^2$, which acts diagonally on the Hamming basis of states $|d\rangle=\tfrac{1}{\sqrt{{n \choose d}}}\sum_{x:|x|=d}|x\rangle$, $d=0,1,\dots,n$. The initial state $\ket{s}$ is easily expressed in this basis as 
\[
|s\rangle %
=\sum_{d=0}^{n} \sqrt{\frac{{n \choose d}}{2^{n}}}|d\rangle,
\]
\noindent %
and a straightforward %
calculation shows the action of the mixing Hamiltonian to be 
\[
B|d\rangle=\sqrt{d(n-d+1)}|d-1\rangle+\sqrt{(d+1)(n-d)}|d+1\rangle,
\]
with the understanding that $\ket{n+1}=\ket{0-1}=0$. 

Hence, the classical simulation of QAOA for MaxCut on complete graphs requires dealing with matrices of size $(n+1)\times(n+1)$, which can be accomplished in poly$(n)$ time for QAOA depth up to $p=$poly$(n)$, modulo accuracy considerations.   
Note that we may again apply the $\mathbb{Z}_2$-symmetry above to further reduce the subspace dimension by an additional factor of $2$ by the identification of pairs of states $\ket{d}$ and $\ket{\bar{d}}:=\ket{n-d}$ (more precisely, to dimension $(n+1)/2$ for $n$ odd or $n/2+1$ for $n$ even).

\subsection{On the difficulty of obtaining the symmetry group}\label{sec:measuresymmhardness}

For some problems 
efficiently determining all  the 
classical symmetries in general may not be possible. 
Nevertheless,  %
our results above apply for any known group of classical symmetries, which may be obtained through a number of possible  approaches. 
For determining graph symmetries, %
efficient algorithms or methods \rev{with worst-case exponential, but sufficiently low to be useful in practice, running time} exist %
for many practically important classes of problem graphs. %
Here we briefly review some %
such approaches %
as well as the complexity of obtaining the full 
graph automorphism group. %
\revtwo{We note that %
graph symmetry analysis techniques 
have been previously applied in the context of characterizing the power of %
adiabatic quantum computation~\cite{Bringewatt2020}.}
Similar considerations apply more generally.

Given an arbitrary graph %
(e.g., an instance of MaxCut), %
an important open problem is whether there exists an efficient classical algorithm for determining its automorphism group (in general), with the best algorithm known resulting from Babai's recent breakthrough discovery of a quasi-polynomial time algorithm for the graph isomorphism problem~\cite{babai2015giquasipoly}. %
Since %
determining a generating set of the graph automorphism group is polynomial time equivalent to %
graph isomorphism~\cite{arvindalgebraandcomputation}, this leads to a quasi-polynomial %
upper bound to the time complexity of determining the full group of automorphisms. %
Nevertheless, 
polynomial-time algorithms exist for many important classes of graphs, %
including those with bounded degree~\cite{luks1982isomorphism}, bounded genus~\cite{filotti1980polynomial,miller1980isomorphism}, or bounded tree-width~\cite{bodlaender1990polynomial} or characterized by particular forbidden minors%
~\cite{grohe2012fixed,grohe2012structural}, among others~\cite{mckay1981practical,McKay201494}.

Note that while the full graph automorphism group $Aut(G)$ can contain up to $n!$ elements, for the applications presented in Sec.~\ref{sec:learningtopsection} it %
suffices to construct its generating set.
Since for an arbitrary group with $m$ elements there always exists a generating set of size $\log{m}$~\cite{arvindalgebraandcomputation}, 
the automorphism group can always be represented efficiently 
(though, as indicated, finding this representation may be challenging). These observations imply that the developed approach can be efficiently %
applied to a variety of important problems.
For example, consider \maxcut{} on three-regular graphs, which is NP-hard to approximate beyond 0.997~\cite{Berman1999}; in this case, the generating set of %
automorphisms %
can be constructed efficiently, and hence the %
techniques of this section and the following 
may be applied.

We emphasize that in practice one often can determine the automorphism group for problems of %
substantial size, although worst-case guarantees may be lacking. Indeed, state-of-the-art \rev{solvers} are able to compute the 
automorphism group generators 
for random graphs with up to tens of thousands of nodes in less then a second~\cite{McKay201494}. \rev{The perfomance of such methods varies with the type of the graph. For example, for vertex-transitive hypercubes, the full automorphism group can be obtained in under 600 seconds even for graphs with millions of nodes~\cite{McKay201494}. However, there are classes of graphs that lead to large search trees and correspondingly larger runtimes even for instances with a few thousands of nodes~\cite{piperno2008search}.}
Many such \rev{methods} have been proposed to accelerate computation of automorphisms, including \texttt{nauty}~\cite{mckay1978computing,mckay1981practical} (which we employ for the application of Sec.~\ref{sec:learningtopsection}), \texttt{Traces}~\cite{McKay201494}, \texttt{saucy}~\cite{darga2004exploiting}, \texttt{Bliss}~\cite{junttila2007engineering}, and \texttt{conauto}~\cite{lopez2014novel}, %
with accelerated running time resulting from the application of
the Weisfeiler--Lehman test and %
search tree pruning heuristics to generate canonical labeling of nodes, among other techniques.

\section{Using symmetry to predict QAOA performance}\label{sec:learningtopsection}

We now present an example application that extends and applies 
the ideas introduced in Sec.~\ref{sec:theory} to study 
the empirical relationship between QAOA performance and symmetries of the objective function in a particular setting. 
We use a machine learning approach employing features computed from the group of automorphisms of a graph, as well as other symmetry considerations, to predict the minimum depth required for QAOA to achieve a fixed approximation ratio for the MaxCut problem on it. 
We focus on studying the connection between symmetry properties of the particular problem instance and QAOA performance. 
To this end, we train two support vector models that  use 
a set of problem symmetry measures. We restrict QAOA parameters to a specific class of schedules, defined and motivated in Sec.~\ref{sec:smoothschedules}, to circumvent %
any computational difficulties related to parameter optimization.
We use symmetry measures corresponding both to exact symmetries as defined in Theorem~\ref{thm:qaoasym}, and to approximate symmetries we describe in Sec.~\ref{sec:measuresymm} below. %

Here we use $\pmin = \pmin(C,B,\epsilon)$ defined as ``the smallest $p$ with which QAOA restricted to linear schedules achieves an approximation ratio of $r=(1-\epsilon)$ for a given instance'' as the metric for quantifying QAOA performance, with the approximation ratio $r$ defined in Eq.~(\ref{eq:qaoa_approx_r}). 
Thus defined, $\pmin$ exists for any problem instance, since from the adiabatic limit QAOA with $p\rightarrow\infty$ can solve any optimization problem exactly~\cite{farhi2014quantum}, and this remains true in our chosen setting of linear schedules. As the \revtwo{time to execute the} QAOA \revtwo{circuit} grows linearly  
with its depth~$p$, and since in the NISQ era the achievable depth~$p$ is limited by the error rates and coherence time of the quantum hardware, $\pmin$ gives a meaningful measure of QAOA performance for our purposes. Note that  here we approach QAOA as a \emph{heuristic} applied to a particular problem \emph{instance} 
or set of instances 
and not 
necessarily 
as an approximation algorithm with performance guarantees on a given problem class.

\subsection{Smooth schedules: a practical approach to QAOA}\label{sec:smoothschedules}

One challenge in both the analysis and the implementation of QAOA is 
the need to %
select sufficiently good algorithm parameters $\vec{\beta},\vec{\gamma}$. Optimizing  parameters for QAOA (e.g., variationally) appear to be difficult in practice~\cite{Shaydulin2019MultistartDOI,Shaydulin2019EvaluatingDOI,khairy2019learning}, presenting two challenges in particular  for numerical experiments. First, performing such numerical studies
requires solving the parameter optimization problem for a large number of instances, which is especially difficult in practice as the number of parameters becomes significant.
Second, even if such data were collected, it would likely not be representative of real-world QAOA performance, since finding optimal parameters for every instance may not be a practical way of using QAOA as the depth~$p$ becomes large.

\begin{wrapfigure}{r}{0.31\textwidth}
  \begin{center}
    \includegraphics[width=0.3\textwidth]{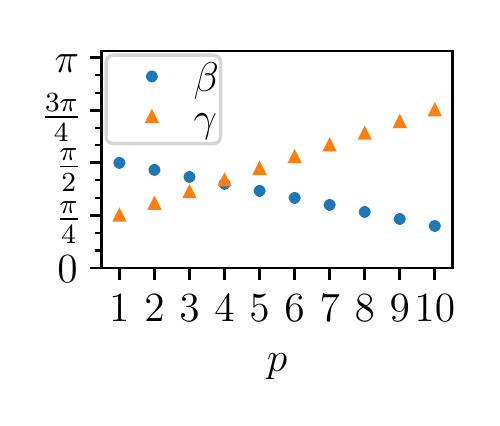}
  \end{center}
  \caption{Example of a linear schedule for $p=10$. Here $\beta_j$ changes linearly from $\pi/2$ at the first QAOA step to $\pi/5$ at the last QAOA step, and $\gamma_j$ changes from $\pi/4$ to $3\pi/4$.}
  \label{fig:linearexample}
\end{wrapfigure}

Instead, more practical approaches to employing QAOA have been proposed. %
One approach, explored in~\cite{khairy2019learning,wilson2019optimizing,verdon2019learning}, among others, involves choosing a class of instances and training a machine learning model to quickly produce high-quality
QAOA parameters for that class. While powerful, however, this approach  resolves only the second challenge, because in order to train the model, one still has to find near-optimal parameters for a sufficiently large dataset representative of the chosen class of instances. 
Another approach, which is used in this %
section, is to specify a class of \emph{parameter schedules} and then fit a schedule of this class to a particular problem instance, such that the number of free parameters is significantly decreased, and 
the difficulty of the parameter optimization problem significantly reduced. 
There are various reasonable choices for the class of schedules; in this work, we focus on \emph{smooth} schedules, that is, schedules where the change from one component to the next in $\vec{\beta}\in \mathbb{R}^p$ and $\vec{\gamma}\in \mathbb{R}^p$ is small relative to the difference between minimum and maximum values of $\beta$ and $\gamma$. Similar approaches to parameter setting for QAOA have been considered %
in~\cite{zhou2018qaoaperformance,crooks2018performance,mnebg2019quantumannealin}, among others. Our particular method is inspired by the reparameterization developed in \cite{zhou2018qaoaperformance}. 
Specifically, we focus on \emph{linear schedules}, that is, schedules where for a given depth~$p$ the component parameters $\beta_j, \gamma_j$ %
change linearly with each step $j$:  
\begin{equation}
\beta_j = a_\beta j+b_\beta, \;\;\;\;\; \gamma_j = a_\gamma j+b_\gamma,
\end{equation}
for $j=1,2,\dots,p$, where $a_\beta, b_\beta, a_\gamma, b_\gamma$ are the remaining free parameters to be optimized. %
An example of such a parameter schedule is presented in Fig.~\ref{fig:linearexample}.

 Restricting QAOA parameters to a particular class of schedules clearly yields a tradeoff between the difficulty of parameter optimization and the algorithm's performance, relative to unrestricted parameters.
 Fitting a particular class of schedules is much easier than finding an optimal general schedule. For example, in general for $p=10$ finding optimal parameters requires optimizing $2\times p = 20$ parameters, whereas fitting a linear schedule requires  optimizing only over 4 parameters (slope and intercept for both $\beta$ and $\gamma$). We also observe this in practice, with multistart methods~\cite{Shaydulin2019MultistartDOI} capable of finding optimal linear schedules quickly and reliably. At the same time, this restriction may increase the minimum depth needed to achieve the desired approximation ratio (i.e., $p_{\min}(C,B,\epsilon) > p_{\min}(C,B,\epsilon,\text{linear schedule})$)\revtwo{, leading to the full power of QAOA with general parameters not being accurately represented by the performance of the restricted variant.} Note that in adiabatic limit $p\rightarrow\infty$ linear schedules are sufficient, so such minimum depth $p_{\min}$ always exists. For the remainder of the section %
 we will  consider QAOA with linear schedules only, i.e. 
$p_{\min} := p_{\min}(C,B,\epsilon,\text{linear schedule})$.

\begin{figure}[b] %
\subfloat[$\pmin$ as a function of the number of nodes. $\pmin$ grows the fastest for the least symmetric graphs (``Trivial'') and the slowest for the most symmetric (complete and star graphs).]{\includegraphics[width=0.7\textwidth]{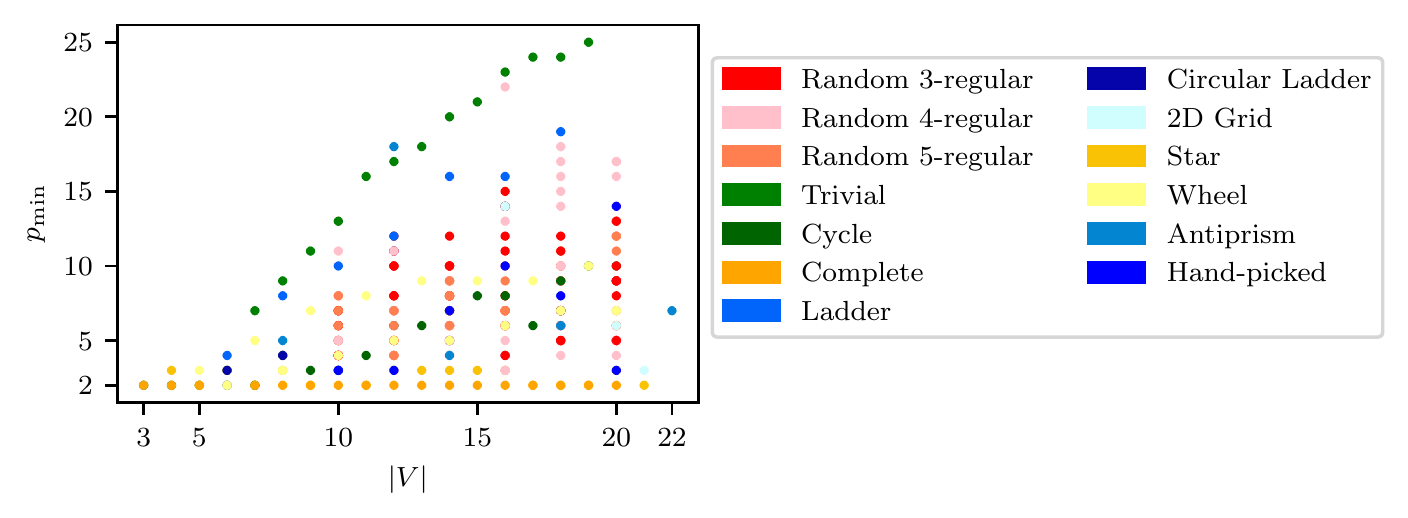}%
\label{fig:pminvsnnodes}}
\hfill
\subfloat[True $\pmin$ and $\pmin$ predicted by support vector regression for both training and testing data. Median absolute error on the test set is $1.37$.]{\includegraphics[width=0.26\textwidth, trim=0.13cm 0cm 0cm 0cm]{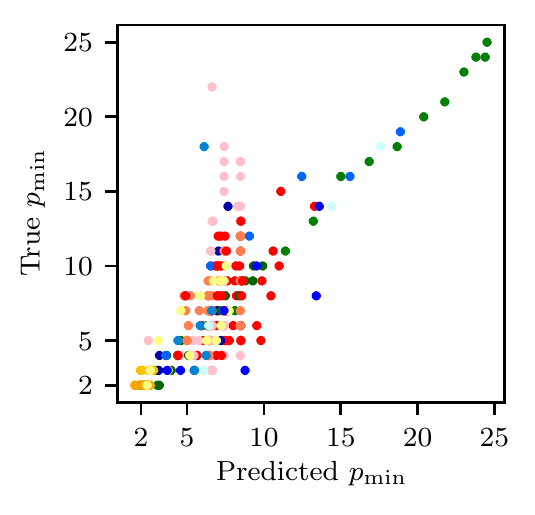}%
\label{fig:svrprediction}}
\caption{Using problem symmetry to predict $\pmin$.}
\label{fig:svrandlegend}
\end{figure}

\begin{table}[b]
    \centering
    \begin{tabular}{l|c|c}
        Feature & Pearson $r$ & p-value \\
        \hline
$\log{Aut(G)}$                 & -0.36397 & 0.00000 \\
$\frac{1}{h'}\sum_{i=1}^{h'}\log{Aut(G'_{i})}$           & -0.32851 & 0.00000 \\
$\frac{1}{h''}\sum_{i,j}\log{Aut(G''_{i,j})}$           & -0.30854 & 0.00000 \\
$|V|$                    & +0.37075 & 0.00000 \\
$|O(G)|$                   & +0.39314 & 0.00000 \\
$\frac{1}{h'}\sum_{i=1}^{h'}|O(G_i)|$             & +0.44326 & 0.00000 \\
$\frac{1}{h''}\sum_{i,j}|O(G''_{i,j})|$             & +0.43035 & 0.00000 \\
$I(G)$           & -0.32295 & 0.00000 \\
$\frac{1}{h'}\sum_{i=1}^{h'}I(G_i)$     & -0.36913 & 0.00000 \\
$\frac{1}{h''}\sum_{i,j}I(G''_{i,j})$     & -0.34254 & 0.00000 
    \end{tabular}
    \caption{Pearson correlation coefficient $r$ and p-value for the test of non-correlation of features with $\pmin$. See Sec.~\ref{sec:measuresymm} for the definitions of the features.}
    \label{tab:correlationfeaturepmin}
\end{table}

\subsection{Exact and approximate symmetry measures}
\label{sec:measuresymm}

Quantitative analysis of graph symmetry is a challenging problem. Over the years, multiple symmetry indices have been introduced to attempt to tackle this problem in different contexts. Some notable examples are graph entropy~\cite{Mowshowitz1968,simonyi1995graph}, index of symmetry~\cite{mowshowitz2010symmetry}, network redundancy~\cite{macarthur2008symmetry}, and normalized network redundancy~\cite{howsymmetricarerealworldgraphs}. 
However, we observe that all of these metrics fail to capture the finer differences between instances required for achieving high predictive power.
Therefore we propose constructing a data-driven symmetry index that is tailored to our particular application. We construct this index by noting that all of the metrics in \cite{Mowshowitz1968,simonyi1995graph,mowshowitz2010symmetry,macarthur2008symmetry,howsymmetricarerealworldgraphs}
are a combination of the number of vertex orbits, the size of orbits, the size of group of automorphisms, the number of nodes of a graph, and the graph entropy. Instead of combining them in an index constructed from some prior intuition, %
we use them as features in a machine learning model. The resulting trained model serves as a symmetry measure of a graph. 

First, we use the following \emph{exact} symmetry measures:

\begin{enumerate}
    \item Logarithm of the size of the group of automorphisms of the graph: $\log{|Aut(G)|}$
    \item Number of vertex orbits of the graph: $|O(G)|$
    \item Entropy of the graph: $I(G) = \frac{1}{n}\sum_i |A_i|\log{|A_i|}$, where $|A_i|$ is the size of $i$th vertex orbit and $n$ is the number of nodes in $G$
\end{enumerate}

Second, we consider the following \emph{approximate} symmetry measures to further improve the performance of our model. Let $G'_{e}$ denote the graph constructed from $G$ by removing edge~$e$. Intuitively, if $G'_{e}$ has an automorphism $\sigma$, then $\sigma$ is said to be an \textit{approximate} automorphism of~$G$. Therefore for each measure of 
exact symmetry %
we can introduce a corresponding measure of \textit{approximate symmetry} of~$G$ defined as the average value of %
the exact measure on $G'_{e}$ when averaged over all edges~$e$. Similarly, we consider $G''_{e_1,e_2}$, %
constructed by removing the two edges $e_1,e_2$ from $G$, and use the symmetry measure of $G''_{e_1,e_2}$ averaged across all pairs of edges ($e_1,e_2$). 
For example, the first metric, 
the logarithm of order of the graph automorphism group, 
induces the following two approximate metrics: the average size of a group of automorphisms for graphs constructed from $G$ by removing one edge ($\frac{1}{h'}\sum_{i=1}^{h'}\log{|Aut(G'_{i})|}$, $h' = {|E| \choose 1} = |E|$), or two edges ($\frac{1}{h''}\sum_{i,j}\log{|Aut(G''_{i,j})|}$, $h'' = {|E| \choose 2}$). Hence, approximate symmetry %
measures %
add another six metrics to the list of features, and in addition we include the number of vertices explicitly, %
for a total of ten features.

Note that if a generating set of the group of graph automorphisms is given, the number of vertex orbits of the graph and their sizes can be computed in polynomial time~\cite{arvindalgebraandcomputation}. This means that the proposed features can be computed in polynomial time from the generating set of the group of automorphisms of the graph and, for approximate features, the induced graphs. 
Recalling the discussion of the complexity of obtaining such generating sets
in Sec.~\ref{sec:measuresymmhardness},  
here we simply reiterate that for many practically interesting problem classes (such as graphs with bounded degree) the full graph symmetry groups and complete set of proposed features can be computed efficiently. 
Additionally, useful bounds on the magnitudes of these symmetry measures are often available; 
in particular, upper bounds for the number of graph automorphisms %
are shown in terms of the vertex degrees (e.g., maximum and average degree of the graph and its  subgraphs) in~\cite{krasikova2002upper}, %
and in terms of the the graph diameter and number of nodes in~\cite{dankelmann2012automorphism}.

\afterpage{%
\begin{antilandscape}
    \centering
    {\footnotesize
    \hspace{-1cm}
    \begin{longtable}{l|c|c|c|c|c|c|c|c|c|c|c|c|c|c|c|c|c|c|c|c|c|c|c}
    Name & 
    \rotatebox{90}{\# graphs} & 
    \rotatebox{90}{$\min |V|$} &  
    \rotatebox{90}{$\max |V|$} &  
    \rotatebox{90}{$\min |E|$} &  
    \rotatebox{90}{$\max |E|$} & 
    \rotatebox{90}{$\min \log{|Aut(G)|}$} &
    \rotatebox{90}{$\max \log{|Aut(G)|}$} & 
    \rotatebox{90}{$\min \frac{1}{h'}\sum_{i=1}^{h'}\log{|Aut(G'_{i})|}$} &
    \rotatebox{90}{$\max \frac{1}{h'}\sum_{i=1}^{h'}\log{|Aut(G'_{i})|}$} &
    \rotatebox{90}{$\min \frac{1}{h''}\sum_{i,j}\log{|Aut(G''_{i,j})|}$} &
    \rotatebox{90}{$\max \frac{1}{h''}\sum_{i,j}\log{|Aut(G''_{i,j})|}$} &
    \rotatebox{90}{$\min |O(G)|$} &
    \rotatebox{90}{$\max |O(G)|$} &
    \rotatebox{90}{$\min \frac{1}{h'}\sum_{i=1}^{h'}|O(G'_i)|$} &
    \rotatebox{90}{$\max \frac{1}{h'}\sum_{i=1}^{h'}|O(G'_i)|$} &
    \rotatebox{90}{$\min \frac{1}{h''}\sum_{i,j}|O(G''_{i,j})|$} &
    \rotatebox{90}{$\max \frac{1}{h''}\sum_{i,j}|O(G''_{i,j})|$} &
    \rotatebox{90}{$\min |I(G)|$} &
    \rotatebox{90}{$\max |I(G)|$} &
    \rotatebox{90}{$\min \frac{1}{h'}\sum_{i=1}^{h'}I(G'_i)$} &
    \rotatebox{90}{$\max \frac{1}{h'}\sum_{i=1}^{h'}I(G'_i)$} &
    \rotatebox{90}{$\frac{1}{h''}\sum_{i,j}I(G''_{i,j})$} &
    \rotatebox{90}{$\frac{1}{h''}\sum_{i,j}I(G''_{i,j})$} \\
\hline
Antiprism & 7 & 6 & 22 & 12 & 44 & 2.8 & 3.9 & 0.7 & 1.4 & 0.4 & 1.0 & 1.0 & 1.0 & 3.0 & 11.5 & 3.5 & 16.7 & 1.8 & 3.1 & 0.6 & 0.7 & 0.3 & 0.6 \\
Circular Ladder & 7 & 6 & 18 & 9 & 27 & 2.5 & 3.9 & 0.9 & 1.4 & 0.5 & 1.0 & 1.0 & 1.0 & 3.0 & 8.3 & 4.2 & 12.4 & 1.8 & 2.9 & 0.7 & 1.0 & 0.5 & 0.8 \\
Complete & 18 & 3 & 20 & 3 & 190 & 1.8 & 42.3 & 0.7 & 37.1 & 0.7 & 33.0 & 1.0 & 1.0 & 2.0 & 2.0 & 2.0 & 2.7 & 1.1 & 3.0 & 0.5 & 2.7 & 0.5 & 2.5 \\
Cycle & 18 & 3 & 20 & 3 & 20 & 1.8 & 3.7 & 0.7 & 0.7 & 0.7 & 1.3 & 1.0 & 1.0 & 2.0 & 10.0 & 2.0 & 10.3 & 1.1 & 3.0 & 0.5 & 0.7 & 0.5 & 0.7 \\
2D Grid & 4 & 16 & 21 & 24 & 42 & 1.4 & 4.4 & 0.2 & 1.4 & 0.1 & 0.6 & 1.0 & 6.0 & 7.5 & 16.0 & 13.5 & 17.0 & 1.2 & 3.0 & 0.2 & 1.1 & 0.1 & 0.5 \\
Hand-picked & 7 & 10 & 20 & 15 & 30 & 4.6 & 5.8 & 1.4 & 2.8 & 0.7 & 1.5 & 1.0 & 1.0 & 3.0 & 7.0 & 5.2 & 10.9 & 2.3 & 3.0 & 0.9 & 1.7 & 0.6 & 1.1 \\
Ladder & 7 & 6 & 18 & 7 & 25 & 1.4 & 1.4 & 0.4 & 0.9 & 0.4 & 1.0 & 2.0 & 5.0 & 3.9 & 14.4 & 3.5 & 15.8 & 1.2 & 1.4 & 0.3 & 0.6 & 0.2 & 0.6 \\
Random 3-reg & 60 & 10 & 20 & 15 & 30 & 0.0 & 3.5 & 0.0 & 1.8 & 0.0 & 1.4 & 3.0 & 20.0 & 7.3 & 20.0 & 7.9 & 20.0 & 0.0 & 1.4 & 0.0 & 0.5 & 0.0 & 0.4 \\
Random 4-reg & 59 & 10 & 20 & 20 & 40 & 0.0 & 2.5 & 0.0 & 1.3 & 0.0 & 0.8 & 4.0 & 20.0 & 7.1 & 20.0 & 8.3 & 20.0 & 0.0 & 1.0 & 0.0 & 0.4 & 0.0 & 0.2 \\
Random 5-reg & 60 & 10 & 20 & 25 & 50 & 0.0 & 1.4 & 0.0 & 0.7 & 0.0 & 0.4 & 4.0 & 20.0 & 8.6 & 20.0 & 9.2 & 20.0 & 0.0 & 1.0 & 0.0 & 0.2 & 0.0 & 0.1 \\
Star & 18 & 4 & 21 & 3 & 20 & 1.8 & 42.3 & 0.7 & 39.3 & 1.4 & 37.1 & 2.0 & 2.0 & 3.0 & 3.0 & 2.0 & 3.0 & 0.8 & 2.9 & 0.3 & 2.7 & 0.6 & 2.5 \\
Trivial & 13 & 7 & 19 & 6 & 18 & 0.0 & 0.0 & 0.8 & 1.1 & 1.6 & 1.7 & 7.0 & 19.0 & 4.7 & 14.6 & 3.9 & 12.7 & 0.0 & 0.0 & 0.3 & 0.5 & 0.5 & 0.7 \\
Wheel & 16 & 5 & 20 & 8 & 38 & 2.1 & 3.6 & 0.7 & 1.0 & 0.5 & 0.9 & 2.0 & 2.0 & 3.0 & 11.0 & 3.3 & 15.4 & 1.1 & 2.8 & 0.5 & 0.6 & 0.3 & 0.5 \\
\hline
Total & 294 & 3 & 22 & 3 & 190  & 0.0 & 42.3 & 0.0 & 39.3 & 0.0 & 37.1 & 1.0 & 20.0 & 2.0 & 20.0 & 2.0 & 20.0 & 0.0 & 3.1 & 0.0 & 2.7 & 0.0 & 2.5 \\
\caption{Description of the dataset. "Trivial" is graph with a trivial group of automorphisms. "Hand-picked" includes various textbook graphs with large groups of automorphisms, e.g. Peterson and Heawood graphs. We make the full dataset available online~\cite{rawdata}.}
    \label{tab:dataset}
    \end{longtable}}

\end{antilandscape}
}

\subsection{Learning the relationship between symmetry and performance}\label{sec:mlsymmetryperfromance}

To numerically study the relationship between the symmetry of the objective function and QAOA performance, we use the \maxcut{} problem on a dataset of 294 graphs, selected as described in Table~\ref{tab:dataset}. We set the target approximation ratio $r=0.95$, with $r$ defined by Eq.~\ref{eq:qaoa_approx_r}.
Therefore the metric to be predicted is given by $p_{\min} = $ ``the smallest $p$ with which QAOA restricted to linear schedules achieves an approximation ratio of $0.95$ for \maxcut{} on a given instance.'' %
While this choice of target approximation ratio is somewhat arbitrary, it is inspired by %
classical hardness results for MaxCut, which is NP-hard to approximate better than $\frac{16}{17}\approx 0.941$ in the worst-case~\cite{Hstad2001}. 
Under the Unique Games conjecture, a weaker assumption than P$\neq$NP, the~$0.87856$ approximation ratio achieved by the Goemans--Williamson algorithm is optimal~\cite{goemans1995improved,khot2007optimal}. Indeed, for similar complexity reasons we do not expect quantum computers to efficiently solve NP-hard problems such as MaxCut. 
In any case, these complexity results become meaningful only as problem sizes becomes large, which %
makes %
comparison with results based on fixed size training and prediction sets difficult. %
Hence, we select an approximation ratio of~$0.95$ as a reasonable target that reflects %
desirable QAOA performance and the possibility of quantum advantage %
beyond the finite dataset of this study.

We use \texttt{nauty}~\cite{McKay201494} to compute the features described in Sec.~\ref{sec:measuresymm}. We compute $\pmin$ for each problem in the dataset by considering iteratively larger depths $p$, starting with $p=2$ and optimizing the linear parameter schedule as defined in Sec.~\ref{sec:smoothschedules}.
until %
the target approximation ratio is achieved. We use COBYLA~\cite{powell1994direct,powell1998direct} implemented in the SciPy~\cite{scipy} package as a local optimizer in the libEnsemble~\cite{libEnsemble_0.5.0} implementation of APOSMM~\cite{LarWild14,LW16}. We use NetworkX~\cite{hagberg2008-rs} for graph operations and GNU-Parallel for large-scale experiments~\cite{tange_ole_2018_1146014}. The code is available online at~\cite{code}.

\subsubsection{Initial observations from the training set}

We begin by discussing some notable properties of the training set. First, we observe 
that all the chosen features correlate with $\pmin$, with %
Pearson correlation coefficients between 0.3 and 0.5 in absolute value, 
where the Pearson correlation coefficient is defined as $\frac{\sum_{i=1}^n(x_i-\bar{x})(y_i-\bar{x})}{\sqrt{\sum_{i=1}^n(x_i-\bar{x})^2}\sqrt{\sum_{i=1}^n(y_i-\bar{y})^2}}$, and  $\bar{x}=\frac{1}{n}\sum_{i=1}^nx_i$ is the sample mean. The empirically observed correlation coefficients are presented in Table~\ref{tab:correlationfeaturepmin}. 
The large value of the correlation coefficients
indicates that the chosen features are meaningful and have predictive power. We note that measures of approximate symmetry %
have correlation coefficients similar to the corresponding exact symmetries, as expected.%

Second, we observe that the hardest problem instances for QAOA (with respect to $\pmin$) in our dataset are found to be the ones  corresponding to the least symmetric graphs, and the easiest are the ones corresponding to the most symmetric graphs, as one may expect from our results of Sec.~\ref{sec:theory}. Fig.~\ref{fig:pminvsnnodes} shows the scaling of $\pmin$ with problem size, with the hardest instances seen to be those %
for which $\pmin$ grows relatively quickly, %
and the easiest %
those for which $\pmin$ grows slowly or does not grow at all. %
For problem instances with almost no symmetry (graphs with trivial automorphism group), $\pmin$ is observed to grow the fastest, and for instances with the most symmetry (e.g., complete graphs) %
$\pmin$ grows the slowest. These results %
provide further evidence of the connection between symmetry and QAOA performance, though we emphasize that more work is needed to better understand to what degree these observations %
hold asymptotically, or, more generally, in the setting of real-world quantum devices. 

\subsubsection{Machine learning approaches to predicting QAOA performance}

To quantify the intuition arising from Fig.~\ref{fig:pminvsnnodes}, we use the metrics discussed in Sec.~\ref{sec:measuresymm} as features for a machine learning model that we train to predict $\pmin$. We reserve 30\% on the dataset as the testing set and use the rest for training the model, with the training set further partitioned for cross-validation as specified below. We approach the problem in two ways. In the first approach, we treat the task of predicting $\pmin$ as a regression problem. In the second approach, we use an ensemble of classifiers to predict $\pmin$. Below we discuss both approaches and the tradeoffs they introduce. We choose median absolute error as our target metric. This choice is motivated by the physical meaning of $\pmin$, namely, the minimum depth required to achieve 0.95 approximation ratio by using QAOA with linear schedules. Since $\pmin$ is an integer, absolute error is an easily interpretable metric. Moreover, an absolute error of one or close to one is tolerable for most applications (e.g., if we want to use the trained model to predict the value of depth $p$ to use in QAOA or if we want to establish whether a given problem requires depth beyond the capabilities of target quantum hardware).

First, we approach the problem of predicting QAOA performance (i.e., predicting the number $\pmin$ for a given problem instance) directly as a regression problem. We use support vector regression (SVR)~\cite{drucker1997support,scikit-learn,chang2011libsvm} with the radial basis function kernel. We use 5-fold cross-validation stratified by graph class for hyperparameter optimization. We achieve 0.73 median absolute error on the training set and 1.37 on the testing set (i.e., on the instances previously unseen by the model). We observe a correlation between the predicted $\pmin$ and true $\pmin$ with a Pearson correlation coefficient of 0.71 on the test set. The results are visualized in Fig.~\ref{fig:svrprediction}.

The second approach we use for predicting $\pmin$ is training an ensemble of classifiers. We simulate a realistic scenario of limited circuit depth by grouping all instances with $\pmin\geq 15$ into one class corresponding to ``depth beyond the capabilities of the target hardware.'' Since $\pmin$ is a discrete value (an integer), classification is a natural choice. An issue with using a classifier directly is that assigning each value of $\pmin$ to be a class discards the information that two consecutive integers are more similar than two integers that are far apart. To address this issue, we instead train an ensemble of binary classifiers, where each classifier answers the question ``Is the value of $\pmin$ smaller than the specified cutoff?'' Using an ensemble of ``cutoff'' classifiers is a standard approach for ordinal regression~\cite{waegeman2009ensemble,perez2013projection}. Each classifier is a support vector machine classifier~\cite{scikit-learn,chang2011libsvm}. We use the radial basis function kernel and the optimal hyperparameters found by cross-validation for the support vector regression. 

\begin{figure}[htbp]
\subfloat[The blue dots are distance to the decision boundary for each classifier. The blue line is the least squares quadratic fit. The yellow dot is the predicted $\pmin$, and the red vertical line is the true $\pmin$.]{\includegraphics[width=0.34\textwidth, trim=0.3cm 0cm 0cm 0cm]{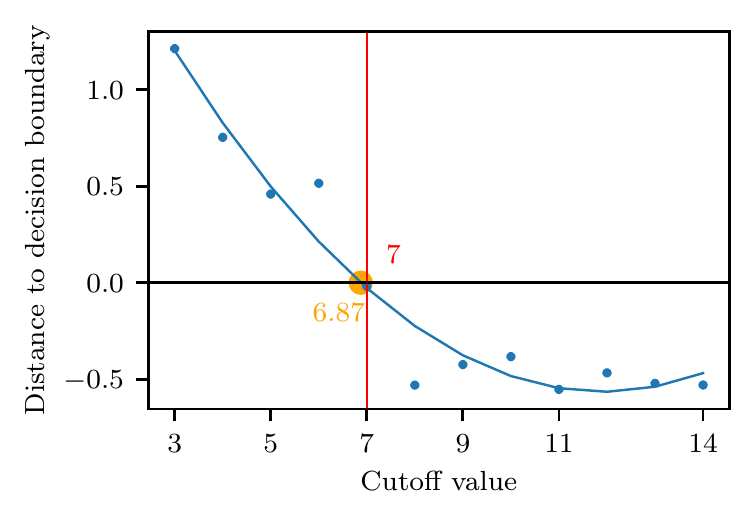}%
\label{fig:svcquadfit1}}
\hfill
\subfloat[If the least squares quadratic fit does not intersect the x-axis within the bounds, we choose either the minimum or maximum $\pmin$ based on the majority vote of classifiers. Here all classifiers put the point in the first class (``$\pmin$ is smaller than the cutoff'').]{\includegraphics[width=0.34\textwidth, trim=0.3cm 0cm 0cm 0cm]{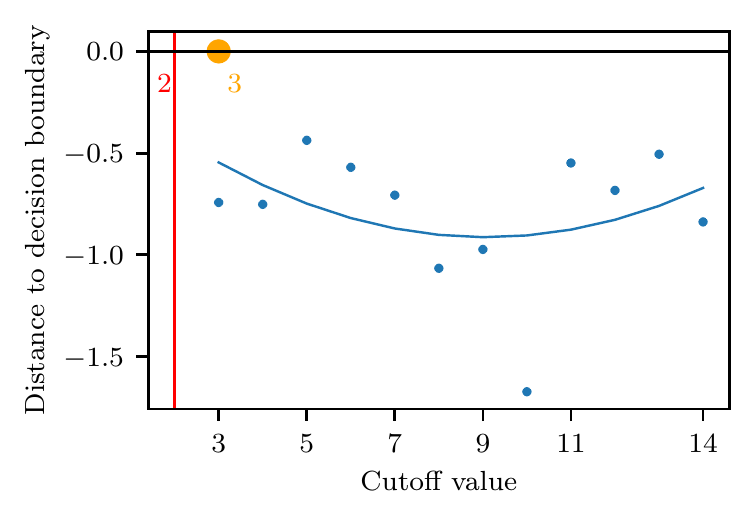}%
\label{fig:svcquadfit2}}
\hfill
\subfloat[True $\pmin$ and $\pmin$ predicted by an ensemble of support vector machines for both training and testing data. Median absolute error on the test set is $1.35$. See Fig.~\ref{fig:svrandlegend} for the color legend.]{\includegraphics[width=0.26\textwidth]{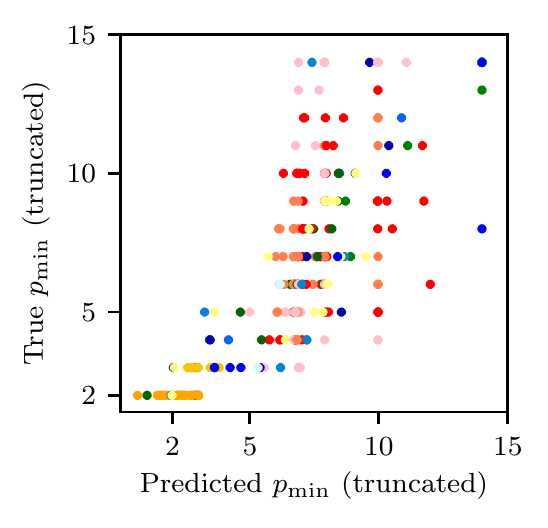}%
\label{fig:svcprediction}}
\caption{Using an ensemble of SVM classifiers to predict $\pmin$.}
\label{fig:svcall}
\end{figure}

The ensemble of ``cutoff'' binary classifiers is used to predict $\pmin$ in the following way. For a new point, we compute the distance to the decision boundary for each classifier. By convention, if the distance is positive, then the point belongs to the first class ($\pmin$ is smaller than the cutoff); otherwise it belongs to the second class ($\pmin$ is greater than or equal to the cutoff). Therefore the minimum cutoff value at which the binary classifiers are assigning the point to the second class (i.e., at which the distance to the decision boundary changes sign) is the $\pmin$ for this point. To make the distances to the decision boundary directly comparable for all classifiers, we standardize them in the following way. For each classifier we compute distances to the decision boundary on the entire training set. If the standard deviation of these distances is $\sigma$, we let the standardized distance to the decision boundary be $d_z=d/\sigma$, where $d$ is the distance to the decision boundary for the new point. To predict $\pmin$ for the point, we fit a least squares quadratic function to the standardized distances to the decision boundary as a function of the cutoff value for all classifiers, and we use its intersection with the x-axis as the predicted value. This process is visualized in Figures~\ref{fig:svcquadfit1} and \ref{fig:svcquadfit2}. We achieve a median absolute error of 1.35 on the test set. 

The two approaches considered for predicting $\pmin$ demonstrate similarly good
performance, as %
observed by 
comparing Figures~\ref{fig:svrprediction} and \ref{fig:svcprediction}. As predicted by the high correlation between an exact symmetry feature and the corresponding approximate symmetry features, we observe only a small decrease in performance ($\approx 10\%$ increase in test error) if we discard the approximate symmetry features and use only the three exact ones.
From the observed high ratio of support vectors, we expect that performance can be significantly improved by introducing more datapoints (problem instances). Whereas in this study the size of the dataset was limited by the cost of classical simulation, recent advances in QAOA simulation~\cite{huang2019alibabacloud,yuriQAOAsim} and the increasing availability of quantum hardware are widely believed to enable large-scale studies of QAOA that we expect to significantly boost the performance of the proposed methods.

\section{Discussion and future work} \label{sec:discussion}

In this paper we introduced a formal connection between the dynamical symmetries of QAOA and the classical symmetries of the underlying objective function, which we employed %
to derive several operational results. %
We further demonstrated the power of this connection by training machine learning models to predict QAOA performance %
solely from the information about exact and approximate symmetries of the problem. As emphasized, our approach is general 
and applies to a wide variety of optimization problems, 
and may be easily extended to %
generalizations of QAOA beyond the transverse-field mixer and uniform superposition initial state. 

An especially attractive future  research direction is to apply our %
results towards deriving improved %
bounds concerning QAOA performance and conditions for 
(in)efficient classical simulation (i.e., to better characterize regimes of potential quantum advantage),
in the spirit of the %
approaches of \cite{braviyobstacles} and \cite{farhi2016quantum}.  Particularly desirable is the further identification of classes of problems where symmetry leads to provable performance advantages, or limitations. 
In any case, we are optimistic that symmetry  
insights 
may be an important tool in the design of better quantum ansatz and algorithms for optimization, and beyond.  

Another interesting future direction is to  %
consider further %
generalizations of our notion of symmetry. 
Since in some cases QAOA may not \lq\lq see the whole graph\rq\rq%
~\cite{farhi2020qaoaneedsfullgraph}, symmetries involving particular subsets of variables (subgraphs) as dictated by the structure of the cost function may be even more precise predictors of QAOA performance. Research in these directions is complicated by the fact that such symmetries may  come into play only for much larger graphs, as well as other combinatoric considerations limiting 
our ability to perform numerical studies. 
Indeed, a limitation to applying our machine learning approach in practice is that for some classes of problems the metrics defined in Sec.~\ref{sec:measuresymm} may not be computable in polynomial time in the worst case. %
While for problem instances with %
hundreds or thousands of variables off-the-shelf tools like \texttt{nauty}~\cite{McKay201494} can compute them in seconds, this approach is not scalable to graphs with hundreds of thousands or millions of nodes.
Therefore an important next step is 
exploring methods to efficiently approximate these quantities with techniques such as network alignment~\cite{qiu2019elruna}.  
Interpretability of the machine learning approach may also evolve with both the size of instances and consequently the training set size. Although the SVM/SVR approach is one of the most interpretable learning models that helps  estimate the importance of features and their combinations, advanced scalable nonlinear methods such as those presented in~\cite{sadrfaridpour2019engineering} will likely be required in order to avoid overfitting of the model.

\section*{Acknowledgments}

We thank Salvatore Mandr\`a, Gianni Mossi, Stefan Wild, Ehsan Sadrfaridpour, Justin Sybrandt, Jeffrey Larson, and  Mikhail Klin for valuable discussions and assistance in the preparation of this manuscript. 
  R.S. is grateful to KBR, the QuAIL group, and the NASA Ames Research Center for the internship opportunity that enabled this research. R.S. and I.S. were supported by funding from the Defense Advanced Research Projects Agency (DARPA). 
 S.H., T.H, and R.S appreciate support from the NASA Ames Research Center and support from DARPA under IAA 8839, Annex 114.
 S.H. and T.H. were supported in part by the NASA Academic Mission Services under Contract NNA16BD14C. 
R.S. was supported by Laboratory Directed Research and Development (LDRD) funding from Argonne National Laboratory, provided by the Director, Office of Science, of the U.S. Department of Energy under Contract No. DE-AC02-06CH11357 and by the U.S. Department of Energy, Office of Science, Office of Advanced Scientific Computing Research, Accelerated Research for Quantum Computing program. 
The views, opinions and/or findings expressed are those of the author and should not be interpreted as representing the official views or policies of the Department of Defense or the U.S. Government.

\bibliographystyle{plainnat}
\bibliography{references}

\vfill
\framebox{\parbox{\textwidth}{The submitted manuscript has been created by
UChicago Argonne, LLC, Operator of Argonne National Laboratory (``Argonne'').
Argonne, a U.S.\ Department of Energy Office of Science laboratory, is operated
under Contract No.\ DE-AC02-06CH11357.  The U.S.\ Government retains for itself,
and others acting on its behalf, a paid-up nonexclusive, irrevocable worldwide
license in said article to reproduce, prepare derivative works, distribute
copies to the public, and perform publicly and display publicly, by or on
behalf of the Government.  The Department of Energy will provide public access
to these results of federally sponsored research in accordance with the DOE
Public Access Plan \url{http://energy.gov/downloads/doe-public-access-plan}.}}

\end{document}